\documentclass[a4paper,11pt]{article}
\usepackage{latexsym,amssymb,amsfonts,amsmath,amsthm}
\usepackage[dvips]{graphicx}

\setlength{\evensidemargin}{-3mm}
\setlength{\oddsidemargin}{-3mm}
\setlength{\topmargin}{-10mm}
\setlength{\textheight}{230mm}
\setlength{\textwidth}{165mm}

\newtheorem{theorem}{Theorem}
\newtheorem{lemma}{Lemma}
\newtheorem{corollary}{Corollary}
\newtheorem{proposition}{Proposition}
\newtheorem{remark}{Remark}
\newtheorem{definition}{Definition}
\newtheorem{example}{Example}

\newcommand{\bs}[1]{\boldsymbol{#1}}

\newcommand{\spann}{{\rm span}}

\usepackage{soul}
\usepackage{color}
\newcommand{\rp}[1]{{\color{black} #1}}%

\title{{\Large {\bf Partition-based discrete-time quantum walks  
}
}}
\author{ 
Norio Konno,$^{1}$ 
\footnote{konno@ynu.ac.jp 
}\quad
Renato Portugal,$^{2}$ 
\footnote{
portugal@lncc.br 
}\quad
Iwao Sato,$^{3}$ 
\footnote{
isato@oyama-ct.ac.jp 
}\quad 
Etsuo Segawa$^{4}$ 
\footnote{e-segawa@m.tohoku.ac.jp (corresponding author)
}
\\ 
{\scriptsize $^1$ 
 Department of Applied Mathematics, Faculty of Engineering, Yokohama National University, 
}\\
{\scriptsize 
Hodogaya, Yokohama 240-8501, Japan. 
} \\
{\scriptsize $^2$ 
National Laboratory of Scientific Computing - LNCC,
}\\
{\scriptsize 
Av. Get{\'u}lio Vargas 333, Petr{\'o}polis, RJ, 25651-075, Brazil
} \\
{\scriptsize $^3$ 
Oyama National College of Technology,
}\\
{\scriptsize
 Oyama, Tochigi 323-0806, Japan
}\\
{\scriptsize $^{4}$ 
Graduate School of Information Sciences, Tohoku University,
}\\
{\scriptsize 
Aoba, Sendai 980-8579, Japan
} \\
} 
\vskip 1cm

\date{\empty }
\pagestyle{plain}

\begin{document}
\maketitle

\par\noindent
\begin{small}
\par\noindent
{\bf Abstract}. 
\rp{We introduce a family of discrete-time quantum walks, called two-partition model, based on two equivalence-class partitions of the computational basis, which establish the notion of local dynamics. This family encompasses most versions of unitary discrete-time quantum walks driven by two local operators studied in literature, such as the coined model, Szegedy's model, and the 2-tessellable staggered model. We also analyze the connection of those models with the two-step coined model, which is driven by the square of the evolution operator of the standard discrete-time coined walk. We prove formally that the two-step coined model, an extension of Szegedy model for multigraphs, and the two-tessellable staggered model are unitarily equivalent. Then, selecting one specific model among those families is a matter of taste not generality.}
\footnote[0]{
{\it Keywords: Quantum walk, coined walk, Szegedy's walk, staggered walk, graph tessellation, hypergraph walk, unitary equivalence, intersection graph, bipartite graph} 
}

\end{small}

\setcounter{equation}{0}

\section{Introduction}

The quantum walk is a quantized version of the classical random walk. The discrete-time version can be obtained from the path-integral formulation of quantum mechanics~\cite{FH}, which was addressed for instance in Refs.~\cite{Am,CIR} for the infinite line. The most-studied discrete-time version on graphs was proposed in Ref.~\cite{Aharonov:2000} and is known as the coined model because the walker must have an internal state, which is used to determine the direction of the step. The coin is not mandatory, in fact, neither the Szegedy quantum walk~\cite{Sze} nor the staggered model~\cite{Por1} has a coin operator. These latter models define partitions of the vertex set in order to establish the model's evolution operator.
The quantum walk offers a good opportunity for experimental implementations~(see \cite{Jingbo} and references therein) and is an interesting model for analyzing topological phases~\cite{Kitagawa}.

Quantum walks are discussed from many viewpoints as an interdisciplinary research field. From the pure-mathematics viewpoint, the quantum random walk was discussed in the area of quantum probability~\cite{Gudder,Par88} and, more recently, Refs.~\cite{HKSS,HKSS2} introduced the notion of quantum-graph walks. Ref.~\cite{MOS} proposed an extension of quantum walks to simplicial complexes, Ref.~\cite{CGMV} used CMV matrices, proposed in~\cite{CMV} for studies of orthogonal Laurent polynomials on the unit circle, in the analysis of quantum walks, and Ref.~\cite{Konnobook} obtained  limit theorems for quantum walks on the line.
From the scattering viewpoint, quantum walks can be seen as waves that are transmitted and reflected at each vertex~\cite{FH04}. From the computer-science viewpoint, quantum walks can be used to detect and to find marked vertices faster than classical random walks~\cite{AKR,Sze,SKW}.


In this paper, we introduce the notion of partition-based quantum walk with the goal of analyzing the equivalence of quantum walk models under a common framework. We address four models: The coined, Szegedy, 2-tessellable staggered quantum walks, and a new model called two-partition quantum walk, which is a partition-based quantum walk defined by two independent partitions of the computational basis. The partition elements establish the notion of locality or neighborhood. We prove that the coined, Szegedy, and 2-tessellable staggered models are two-partition quantum walks. We also address the converse statement. In order to show that the two-partition model is contained in the Szegedy and 2-tessellable staggered models, we have to extend the Szegedy model for multigraphs and we have to loosen the way one chooses the local unitary operators. Notice that, as a corollary, we obtain that the coined model is included in the extended Szegedy and 2-tessellable staggered models. Those results generalize the analysis of Refs.~\cite{Por4,PS}.

The two-partition model is not contained in the coined model even extending the shift operator. It is well known that the Szegedy quantum walks can be included in the two-step coined model, which employs an evolution operator that is the square of the evolution operator of the coined model, by using the swap operator as the shift operator~\cite{LW17}. This motivates us to analyze the two-step coined model. We are able to prove that the two-partition model is included in the two-step coined model. Since the two-step coined model is a two-partition model, we prove that those models are unitarily equivalent (see Lemma~\ref{lem}). Our results show that the two-step coined model and the extended versions of the Szegedy and 2-tessellable models are unitarily equivalent~(see Theorem~\ref{unitaryeq}).

In order to establish a unitary equivalence of the evolution operators of the quantum walk models, we need to give a precise interpretation of the mathematical description of the walker's allowed locations for each model. In the coined model, the walker steps on the arcs of the graph. In the Szegedy model, the walker steps on the edges of the graph, and in the staggered model, the walker steps on the vertices of the graph. In the original coined model, it is possible to give a precise direction to the walker's steps via the shift operator. In the Szegedy and 2-tessellable staggered models, the evolution operator is the product of two local unitary operators and, under the action of each local operator, the walker goes to more than one location, using the state superposition principle of quantum mechanics. The coined model and the extended version of the Szegedy model use multigraphs. The staggered model always uses simple graphs.

This paper is organized as follows. 
In section 2, we define the two-partition quantum walk. 
In section 3, we show that the coined, Szegedy, and 2-tessellable quantum walks are two-partition quantum walks. We also define quantum walks on hypergraphs and show that it is also a two-partition quantum walk.
In section 4, we address the unitary equivalence among the models and prove Theorem~\ref{unitaryeq}, which is the main result of this work. 
Finally, in section 5, we perform the spectral analysis of coined walks.

\section{Two-partition quantum walk}
Let $\Omega=\{\omega_1,\omega_2,\dots\}$ be a countable set. 
We define two decompositions of $\Omega$ induced by equivalence relations $\pi_1$ and $\pi_2$ over $\Omega$, such that 
	\begin{align*} 
        \Omega/\pi_1 &= \{ [\omega]_{\pi_1}\;|\; \omega\in \Omega \}, \\
        \Omega/\pi_2 &= \{ [\omega]_{\pi_2}\;|\; \omega\in \Omega \},
        \end{align*}
where $[\omega]_{\pi_j}=\{ \omega'\in \Omega \;|\; \omega'\stackrel{\pi_j}{\sim} \omega\}$ ($j=1,2$) is the equivalence class of $\omega$.
Let $J_i$ be the cardinality of the quotient set $\Omega/\pi_i$ ($i=1,2$) and 
let $[J_i]$ denote the set $\{1,2,\dots,J_i\}$ ($i\in 1,2$).
We call the elements of $\Omega/\pi_1$ by $C_i$ for $i\in [J_1]$ and the elements of $\Omega/\pi_2$ by $D_j$ for $j\in [J_2]$ 
and we assume that $|C_i|<\infty$ and $|D_j|<\infty$.

The Hilbert space induced by $\Omega$ is defined as
	\[\mathcal{H}=\ell^2(\Omega)=\{\psi: \Omega\to\mathbb{C} \;|\; \sum_{\omega\in \Omega}|\psi(\omega)|^2<\infty \}, \] 
whose inner product is the standard one. Each equivalence relation $\pi_j$ induces an orthogonal decomposition of $\mathcal{H}$ as follows
	\[ \mathcal{H}=\bigoplus_{i\in [J_1]} \mathcal{C}_i=\bigoplus_{j\in [J_2]} \mathcal{D}_j, \]
where
	\begin{align*} 
        \mathcal{C}_i &=\{\psi\in \mathcal{H} \;|\; \omega\notin C_i \Rightarrow \psi(\omega)=0 \},\;\;(i\in [J_1]), \\
	\mathcal{D}_j &=\{\psi\in \mathcal{H} \;|\; \omega\notin D_j \Rightarrow \psi(\omega)=0 \},\;\;(j\in [J_2]), 
	\end{align*}
that is, $\mathcal{C}_i=\spann\{ \delta_\omega \;|\; \omega\in C_i \}$ and $\mathcal{D}_j=\spann\{ \delta_\omega \;|\; \omega\in D_j \}$,
where $\delta_\omega(\cdot)$ is the Kronecker delta function 
	\[ \delta_\omega(\omega')=\begin{cases} 1 & \text{if $\omega=\omega'$,} \\ 0 & \text{ otherwise.} \end{cases} \]
Let $\hat{E}_i$ and $\hat{F}_j$ be local unitary operators on $\mathcal{C}_i$ and $\mathcal{D}_j$ for each $i\in [J_1]$ and $j\in [J_2]$, respectively. 
$\hat{E}_i$ is a local operator on $\mathcal{C}_i$
when $\langle \delta_{\omega'}, \hat{E}_i\delta_\omega\rangle =0$  if $\omega$ or $\omega'$ does not belong to $C_i$, 
and, in the same way, 
 $\hat{F}_j$ is a local operator on $\mathcal{D}_j$
when $\langle \delta_{\omega'}, \hat{F}_j\delta_\omega\rangle =0$ if $\omega$ or $\omega'$ does not belong to $D_j$.  
We set $\hat{E}=\oplus_{i\in [J_1]}\hat{E}_i$, $\hat{F}=\oplus_{j\in [J_2]}\hat{F}_j$. 
Thus, $\langle \delta_{\omega'},\hat{E}\delta_{\omega}\rangle=0$ if $\omega$ and $\omega'$ belong to different partition elements of $\pi_1$, and also 
$\langle \delta_{\omega'},\hat{F}\delta_{\omega}\rangle=0$ if $\omega$ and $\omega'$ belong to different partition elements of $\pi_2$.
\begin{definition}
The two-partition walk $(\Omega;\pi_1,\pi_2;\hat{U})$ is defined by the following items:
\begin{enumerate}
\item The associated Hilbert space is $\mathcal{H}=\ell^2(\Omega)$ endowed with the standard inner product. 
\item The evolution operator on $\mathcal{H}$ is $\hat{U}(\Omega,\pi_1,\pi_2;\{\hat{E}_i\}_{i\in [J_1]},\{\hat{F}_j\}_{j\in [J_2]}):=\hat{F}\, \hat{E}$ and denoted simply by $\hat{U}$.
\item The probability distribution is $\mu_n^{(\psi_0)}: \Omega \to [0,1]$ for $n\in \mathbb{N}$, $\psi_0\in \mathcal{H}$ such that
 	\[ \mu_n^{(\psi_0)}(\omega)=|\hat{U}^n\psi_0(\omega)|^2, \]
or $\mu_n^{(\psi_0)}(C_j) := \sum_{\omega\in C_j}\mu_n^{(\psi_0)}(\omega)$.         
\end{enumerate}
\end{definition}
The two-partition walk is a quantum walk on the discrete set $\Omega$ with an evolution operator $\hat{U}$ that depends on partitions $\pi_1$ and $\pi_2$, 
which provide the notion of locality in $\Omega$. 
The partitions allow us to choose two block diagonal unitary operators $\hat{E}$ and $\hat{F}$, which determine the evolution operator through the expression $\hat{U}=\hat{F}\, \hat{E}$. 
We use $\hat{U}$ as representing the whole framework of the two-partition walk $(\Omega,\pi_1,\pi_2;\hat{U})$. 
\begin{example}\label{ex1}
Let $\Omega=\{(a,b),(a,c),(d,c)\}$ and let partitions $\pi_1$ and $\pi_2$ be respectively defined by
\begin{align*}  
(\alpha,\beta) \stackrel{\pi_1}{\sim} (\alpha',\beta') \,\,{\Longleftrightarrow}\,\, \beta=\beta',\\ 
(\alpha,\beta) \stackrel{\pi_2}{\sim} (\alpha',\beta') \,\,{\Longleftrightarrow}\,\, \alpha=\alpha'. 
\end{align*} 
In this setting, 
we have $C_1=\{(a,b)\}$, $C_2=\{(a,c),(d,c)\}$, $D_1=\{(a,b),(a,c)\}$, $D_2=\{(d,c)\}$ and 
the matrix expression for the first and second operators $\hat{E}$ and $\hat{F}$ are given by 
\[
\hat{E}=\left[
	\begin{tabular}{c|cc}
        $*$ & $0$ & $0$ \\ \hline
        $0$ & $*$ & $*$ \\ 
        $0$ & $*$ & $*$    
        \end{tabular}
        \right], \quad
\hat{F}=\left[
	\begin{tabular}{cc|c}
        $*$ & $*$ & $0$ \\
        $*$ & $*$ & $0$ \\ \hline
        $0$ & $0$ & $*$    
        \end{tabular}
        \right] 
\]
in the standard orthogonal basis of $\mathbb{C}^3$, 
$\delta_{(a,b)}\cong {}[1,0,0]^T$, $\delta_{(a,c)}\cong {}[0,1,0]^T$, $\delta_{(d,c)}\cong {}[0,0,1]^T$, and the entries $*$ can be nonzero.
\end{example}
\section{Examples of two-partition walk}
\subsection{Bipartite walk}
The bipartite walk is defined on the edge set $E$ of a bipartite multigraph $G=(X\sqcup Y, E)$, 
where $X \sqcup Y$ is the disjoint union of sets $X$ and $Y$. 
The multigraph is connected and, as a particular case, can be a bipartite simple graph.
The $X$-end point of $e\in E$ is denoted by $X(e)$ and the other end point is denoted by $Y(e)$. 
Setting $\Omega=E$, we define equivalence relations $\pi_1$ and $\pi_2$ on $E$ by 
	\begin{align*}
        e \overset{\pi_1}{\sim} f & \,\,{\Longleftrightarrow}\,\, X(e)=X(f),  \\  
        e \overset{\pi_2}{\sim} f & \,\,{\Longleftrightarrow}\,\, Y(e)=Y(f). 
        \end{align*} 
The equivalence relation $\pi_1$ provides a partition of $E$ into equivalence classes $[e]_{\pi_1}=\{f\in E \;|\; f\stackrel{\pi_1}{\sim}e \}$ and, likewise, $\pi_2$ provides a partition into $[e]_{\pi_2}=\{f\in E \;|\; f\stackrel{\pi_2}{\sim}e \}$. The respective quotient sets are
	\begin{align*}
        \Omega/\pi_1 &=\{[e]_{\pi_1} \;|\; e\in E\}=\{ C_x \;|\; x\in X \}\cong X, \\
        \Omega/\pi_2 &=\{[e]_{\pi_2} \;|\; e\in E\}=\{ D_y \;|\; y\in Y \}\cong Y,
        \end{align*}
where $C_x=\{e\in E\;|\;X(e)=x\}$ and $D_y=\{e\in E\;|\;Y(e)=y\}$. 

The one-step dynamics of this walk from an initial edge $e\in E$ is as follows: 
In the first half step under the action of the unitary operator $\hat{E}$, the walker moves from $e$ to a neighbor edge $f$ so that $e$ and $f$ share a common end vertex in $X$, that is, $X(e)=X(f)$.
In the second half step under the action of the unitary operator $\hat{F}$, the walker moves from $f$ to a neighbor edge $g$ so that $f$ and $g$ share a common end vertex in $Y$, that is, $Y(f)=Y(g)$.
The bipartite walk is determined by a bipartite multigraph $G$ and by an evolution operator $\hat{W}$, which is the product of two local unitary operators each one obtained from the direct-sum of $\{\hat{R}_x\}_{x\in X}$ and $\{\hat{R}_y\}_{y\in Y}$, respectively. 
The bipartite walk is described by $(G; \hat{W})$ or simply by $\hat{W}$. 

The Szegedy model~\cite{Sze} is a subclass of the class of bipartite walks. A bipartite walk is an instance of the Szegedy model if the multigraph $G$ is a simple bipartite graph (no multiple edges) and the local unitary operators are obtained from stochastic matrices associated with a classical Markov chain, as described in Ref.~\cite{Sze}.

\noindent\\
\noindent{\bf Extending the Szegedy model }\\
Now we present an extension of the Szegedy model for bipartite multigraphs using the bipartite walk. Consider a classical Markov chain defined on a connected multigraph $H=(V,E)$. Since the bipartite walk is defined on a bipartite multigraph, we consider the duplication of the original multigraph $H$ similar to the method used by Szegedy. 
The duplicated multigraph is the bipartite multigraph $G=(V_2,E_2)$, where $V_2=V\cup V'$,
$V'$ is a copy of $V$, that is, $V'=\{v' \;|\; v\in V\}$, and each edge $(u,v)\in E$ corresponds to an edge $(u,v')\in E_2$ so that $|E(u,v)|=|E_2(u,v')|$,
where $E(u,v)$ is the set of edges in $E$ whose end vertices are $u$ and $v$.

Consider two functions $p: E(G) \rightarrow [0,1]$ and $q: E(G) \rightarrow [0,1]$ with 
\begin{equation}\label{liftupcondition}
\sum_{V(e)=x} p(e)= \sum_{V'(e)=y} q(e)=1, \,\,\forall x\in V , \,\,\forall y \in V',   
\end{equation}
so that 
the original Markov chain on $H$ is naturally lifted up to this duplicated multigraph $G$ by demanding that, for all $u,v\in V$,
\[
\{p(e) \;|\; e\in E_2(u,v')\} = \{q(e) \;|\; e\in E_2(u',v)\}.
\]

In the framework of two-partition walks, we set $\Omega=E_M$ and define $\pi_1$ and $\pi_2$ as 
\begin{align*}
e\stackrel{\pi_1}{\sim}f &\,\,\Longleftrightarrow\,\, V(e)=V(f),\\
e\stackrel{\pi_2}{\sim}f &\,\,\Longleftrightarrow\,\, V'(e)=V'(f).
\end{align*}  
We assign the local unitary operators $\{\hat{R}_x\}_{x\in V}$ and $\{\hat{R}_y\}_{y\in V'}$ on each vector spaces $\mathcal{C}_x$ and $\mathcal{D}_y$ using the following formulas
	\begin{align*}
        \hat{R}_x &= 2|\alpha_x\rangle \langle \alpha_x|-\bs{1}_{\mathcal{C}_x}, \\
        \hat{R}_y &= 2|\beta_y\rangle \langle \beta_y|-\bs{1}_{\mathcal{D}_y},
        \end{align*}
where ``$|\gamma \rangle\langle \gamma |$" represents the orthogonal projection operator onto $\gamma\in \mathcal{H}$, and 
$|\alpha_x\rangle$ and $|\beta_y\rangle$ belong to $\mathcal{C}_x$ and $\mathcal{D}_y$, respectively, defined by 
	\begin{align*} 
        \alpha_x(e) &= \begin{cases} \sqrt{p(e)} & \text{if $V(e)=x$,} \\ 0 & \text{otherwise,} \end{cases}\\ 
        \beta_y(e)  &= \begin{cases} \sqrt{q(e)} & \text{if $V'(e)=y$,} \\ 0 & \text{otherwise.}  \end{cases}
        \end{align*}
The evolution operator is $\hat{W}=\left(\oplus_{y\in V'} \hat{R}_y\right)\left(\oplus_{x\in V} \hat{R}_x\right)$.

\noindent\\
\noindent{\bf Quantum search in the extended Szegedy model }\\
Suppose we define a classical Markov chain in a connected multigraph $H=(V,E)$. 
Searching a vertex in $H$ employing the Markov chain is accomplished as follows: The marked vertices are converted into sinks (or absorbing vertices), by removing the arcs outgoing from the marked vertices. This procedure
generates a new directed multigraph that we call $H_M=(V_M,E_M)$, where $M$ is the set of marked vertices.
The same procedure is used in the extended Szegedy model. The bipartite graph $G=(V_2,E_2)$ is also
converted into a directed bipartite multigraph $G_M=(V_2^M,A_2^M)$, which is called modified multigraph. 
The first column of Fig.~\ref{fig:graphM} depicts an example of a multigraph $H$ and its version with one marked vertex represented as an empty vertex. The second column depicts the corresponding bipartite versions, on which the extended Szegedy quantum walk takes place.

\begin{figure}[htbp]
  \begin{center}
           \includegraphics[clip, width=7cm]{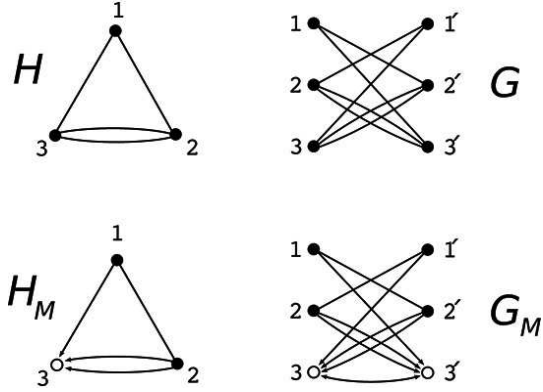}
              \caption{Example of a multigraph $H$ and its corresponding bipartite multigraph $G$ (duplicated multigraph). If $M=\{3\}$, $H_M$ is a directed multigraph and $G_M$ is its corresponding directed bipartite multigraph, which has edges linking each marked vertex and its copy. A marked vertex is a sink and is represented by an empty vertex. }
  \end{center}
  \label{fig:graphM}
\end{figure}

Note that it is possible to preserve completely the above classical dynamics on the directed multigraph $G_M$ with sinks 
even on the non-directed bipartite multigraph, whose edge set is the support of the arc set of $G_M$ when we modify the 
transition probability in the following way. 
Let $E_3$ be the set of non-directed edges linking each marked vertex with its copy. 
The modified $p'$ and $q'$ with domain $E_2^M=\{ |a| \;|\; a\in A_2^{M} \}$ are given by
	\begin{align*}
        p'(e) &=
        \begin{cases}
        p(e) & \text{if $V(e)\notin M$,} \\
        1 & \text{if $e\in E_3$,} \\
        0 & \text{otherwise} 
        \end{cases}
        \\
        q'(e) &=
        \begin{cases}
        q(e) & \text{if $V'(e)\notin M$,} \\
        1 & \text{if $e\in E_3$,} \\
        0 & \text{otherwise.}
        \end{cases}
        \end{align*}  
Since bipartite walks are defined on non-directed bipartite multigraphs,
the quantum-walk dynamics on the modified multigraph is readily obtained from the extended Szegedy model 
as soon as we describe the modified stochastic transition matrix: 
	\[ (P')_{u,v}=  
        \begin{cases} \sum_{u=V(e),v=V'(e)} p'(e)  & \text{: $u\in V, v\in V'$,} \\ \sum_{u=V'(e),v=V(e)} q'(e)  & \text{: $u\in V', v\in V$.}\end{cases}\]

The dynamics is driven by $\hat{W}=\left(\oplus_{y\in V'} \hat{R}_y'\right)\left(\oplus_{x\in V} \hat{R}_x'\right)$ on $\ell^2(E_2^{M})$, where $\hat{R}_x'$ and $\hat{R}_y'$ are defined in terms of $\alpha_{x}'(e)$ and $\beta_{y}'(e)$
given by
\begin{align*} 
        \alpha_x'(e) &= \begin{cases} \sqrt{p'(e)} & \text{if $V(e)=x$,} \\ 0 & \text{otherwise,} \end{cases}\\ 
        \beta_y'(e)  &= \begin{cases} \sqrt{q'(e)} & \text{if $V'(e)=y$,} \\ 0 & \text{otherwise.}  \end{cases}
\end{align*}
Notice that $\alpha_{o(a)}'(|a|)=0$ for the edge $|a|$ corresponding to removing arc $a$ with $o(a)\in M$ in $G_M$, as well as 
$\alpha_{o(b)}'(|b|)=0$ for the edge $|b|$ corresponding to removing arc $b$ with $o(b)\in M'$ in $G_M$.

We consider $\hat{W}\delta_e$ with $e\in E_3$, 
since $\alpha_{V(e)}=\delta_e$ and $\beta_{V'(e)}=\delta_e$,  
we have $\hat{R}_{V(e)}\delta_e=\delta_e$ and $\hat{R}_{V'(e)}'\delta_e=\delta_e$. 
Therefore $\hat{W}\delta_e=\delta_e$ holds, which implies that $\hat{W}$ acts as the identity operator on the subspace spanned by $\{ \delta_e \;|\; e\in E_3 \}$. 
From this above observation, under the decomposition $\ell^2(E_2^{M})=\ell^2(E_2)\oplus \ell^2(E_3)$, $\hat{R}_X:=\oplus_{x\in X}\hat{R}_x$ and $\hat{R}_Y:=\oplus_{y\in Y}\hat{R}_y'$ are reexpressed by
$\hat{R}_X=\hat{R}_{X,M}\oplus \bs{1}_{E_3}$, $\hat{R}_Y=\hat{R}_{Y,M}\oplus \bs{1}_{E_3}$, where 
	\begin{align*}
        \hat{R}_{X,M} &=  2\sum_{x\in X} |\alpha_{M,x}\rangle \langle \alpha_{M,x}|-\bs{1}_{E_2}, \\
        \hat{R}_{Y,M} &=  2\sum_{x\in X} |\beta_{M,x}\rangle \langle \beta_{M,x}|-\bs{1}_{E_2}. 
        \end{align*} 
Here
$\alpha_{x,M},\beta_{y,M}\in \ell^2(E_M)$ are the cut off on the marked elements of $\alpha_x'$ and $\beta_y'$, for $x\in V$, $y\in V'$, respectively, defined by
	\begin{align}\label{peropero}
        \alpha_{x,M}(e)= \begin{cases}  \alpha'_x(e) & \text{if $x\notin M$,} \\ 0 & \text{otherwise,} \end{cases}\,\,\textrm{and}\,\,
        \beta_{y,M}(e) = \begin{cases}  \beta'_y(e) & \text{if $y\notin M'$,} \\ 0 & \text{otherwise.} \end{cases}
        \end{align}
Then, the bipartite walk with the quantum search is expressed by
	\[ \hat{W}=\hat{R}_Y\hat{R}_X=\hat{R}_{Y,M}\hat{R}_{X,M}\oplus \bs{1}_{E_3} \]
under the decomposition $\ell^2(E_M)=\ell^2(E_2)\oplus \ell^2(E_3)$. 
Since the initial state is usually given by
	\[ \psi_0=\frac{1}{\sqrt{|V|}} \sum_{e\in E_2}\sqrt{p(e)}\delta_e\in\ell^2(E_2), \]
which has no overlap to the eigenspace spanned by $E_3$, 
we can concentrate on the main operator $\hat{R}_{Y,M}\hat{R}_{X,M}$ on the subspace generated by $E_2$. 

Thus, the evolution operator of the extended Szegedy model for $M\cup M'\subset V_2$ driven by a bipartite walk can be reduced to the following settings: 
	\begin{align} 
	\Omega & =E_2, \notag \\
	e \stackrel{\pi_1}{\sim} f &\Leftrightarrow V(e)=V(f), \label{revisedSzeMethod1}\\
	e \stackrel{\pi_2}{\sim} f &\Leftrightarrow V'(e)=V'(f). \notag
	\end{align}
The evolution operator $\hat{W}=\hat{F}\hat{E}$ on $G_2$ with $\hat{F}=\oplus_{y\in V'}\hat{F}_y$ and $\hat{E}=\oplus_{x\in V}\hat{E}_x$ is expressed by
	\begin{align}
	\hat{E}_x &= 2|\tilde{\alpha}_x\rangle\langle \tilde{\alpha}_x|-\bs{1}_{\mathcal{C}_x},\notag \\
	\hat{F}_y &= 2|\tilde{\beta}_y\rangle\langle \tilde{\beta}_y|-\bs{1}_{\mathcal{D}_x}, \label{revisedSzeMethod2}
	\end{align}
where $\tilde{\alpha}_x,\tilde{\beta}_{y}\in \ell^2(E_2)$ satisfy 
\begin{enumerate}
\item\label{pipi1} $\tilde{\alpha}_x(e)=\tilde{\beta}_{x'}(f)$ for every $e,f\in E_2$ and 
\[ (V(e))'=V'(f),\; (V(f))'=V'(e), \] 
\item\label{pipi2} if $V(e)\neq x$ then $\tilde{\alpha}_x(e)=0$, 
\item\label{pipi3}  for every $x\in V$ 
\[ ||\tilde{\alpha}_x||=\begin{cases} 1 & \text{if $x\notin M$,} \\ 0 & \text{otherwise.}\end{cases} \]
\end{enumerate}
Conditions (\ref{pipi1}) and (\ref{pipi2}) are generalization of (\ref{liftupcondition}). 
Condition (\ref{pipi3}) is equivalent to (\ref{peropero}).
From now on, we can regard $\hat{R}_{Y,M}\hat{R}_{X,M}$ on $E_2$ as the evolution operator of the extended Szegedy model. 
\subsection{Coined walk}
The coined walk is determined by a multigraph $G=(V, A)$, 
where $A$ is the set of symmetric arcs induced by edge set of $G$, that is, $a\in A$ if and only if $\bar{a}\in A$, where
$\bar{a}$ is the inverse arc of $a$. 
The origin of $a$ is denoted by $o(a)\in V$ and the terminus of $a$ is denoted by $t(a)$. 
For $a\in A$, $|a|$ is the edge in $E$ inducing $a$. 
Setting $\Omega=A$, we define the following equivalence relations
	\begin{align*}
        a \overset{\pi_1}{\sim} b & \,\,{\Longleftrightarrow}\,\, t(a)=t(b),  \\ 
        a \overset{\pi_2}{\sim} b & \,\,{\Longleftrightarrow}\,\, |a|=|b|. 
        \end{align*} 
The equivalence relation $\pi_1$ provides a partition of $A$ into equivalence classes 
$[a]_{\pi_1}=\{b\in A \;|\; b\stackrel{\pi_1}{\sim}a \}$ and, likewise, $\pi_2$ provides a partition into $[a]_{\pi_2}=\{b\in A \;|\; b\stackrel{\pi_2}{\sim}a \}$. 
The respective quotient sets are 
	\begin{align*}
        \Omega/\pi_1=\{[a]_{\pi_1}\;|\;a\in A\}=\{ C_u \;|\;u\in V \}\cong V, \\
        \Omega/\pi_2=\{[a]_{\pi_2}\;|\;a\in A\}=\{ D_e \;|\;e\in E \}\cong E.
        \end{align*}
We set $C_u:=\{a\in A\;|\;t(a)=u\}$ and $D_e:=\{a\in A\;|\;|a|=e\}$. 
The unitary operator $\hat{E}=\oplus_{u\in V}\hat{E}_u$ associated with $C_u$ is called the coin operator. 
The unitary operator $\hat{F}=\oplus_{e\in E}\hat{F}_e$ associated with $D_e$ is usually defined as 
$\hat{F}_{|a|}\delta_a=\delta_{\bar{a}}$ and  $\hat{F}_{|a|}\delta_{\bar{a}}=\delta_{a}$, and 
is called the flip-flop shift operator. 
The one-step dynamics of this walk from an initial arc $a\in A$ is as follows: 
At the first half step under the action of the coin operator $\hat{E}$, a walker on the arc $a$ moves to a neighbor arc $b$ that has a common terminal vertex, that is, $t(a)=t(b)$. 
At the second half step under the action of the flip-flop shift operator $\hat{F}$, the walker on $b$ flips the direction to $\bar{b}$. 
One time-step can be regarded as the dynamics of a plane wave, which is reflected and transmitted in every vertex and its relation to the quantum graph is addressed in Ref.~\cite{HKSS}.
The evolution operator is $\hat{\Gamma}=\hat{F}\,\hat{E}$. 

\

\noindent{\bf Extending the shift operator } \\
A natural extension of the coined walk is to extend the ``shift" operator $\hat{F}_e$ corresponding to the transposition
so that $\hat{F}_{e}$ is a general two-dimensional unitary operator. 
When we perform such an extension, we can find the unitary matrices 
in the studies of the CMV matrix~\cite{CMV}, a radio activity isotope separation by alternative terahertz pulse engineering~\cite{MY}, 
and quantum simulation of topological phases~\cite{Kitagawa,Asboth}. 
For example, for the CMV matrix, the corresponding coined walk with the extended shift operator is expressed as follows: 
the graph is the one-dimensional half integer lattice, and the coin and extended shift operators are 
\[ C=1\oplus \begin{bmatrix} \bar{\gamma}_1 & \rho_1 \\ \rho_1 & -\gamma_1 \end{bmatrix} 
      \oplus \begin{bmatrix} \bar{\gamma}_3 & \rho_3 \\ \rho_3 & -\gamma_3 \end{bmatrix}\oplus \cdots \]
\[ S=\begin{bmatrix} \bar{\gamma}_0 & \rho_0 \\ \rho_0 & -\gamma_0 \end{bmatrix} 
      \oplus \begin{bmatrix} \bar{\gamma}_2 & \rho_2 \\ \rho_2 & -\gamma_2 \end{bmatrix}\oplus \cdots\]      
under the order of the standard basis of the coined walk $(0;-),(1;+),(1;-),(2;+),(2;-),\dots$, 
where $(i;\epsilon)$ is the arc of the half integer whose terminus is $i$ and origin is $i-\epsilon$. 
Here $\gamma_j\in\mathbb{C}$ with $|\gamma_j|\leq 1$ is called the Verblunsky parameter and $\rho_j=\sqrt{1-|\gamma_j|^2}$. 
The CMV matrix is expressed by $(SC)^T$. 

\

\noindent{\bf Quantum search driven by coined QW}\\
In particular, if we assign the following local coin operator $\hat{E}_u$ to each $u\in V$ with the marked vertex set $M\subset V$, then 
it is called Szegedy's coined walk with the marked vertices $M$:
Let $\alpha_u$ be a unit vector on $\mathcal{C}_u$, and $f_0:V\to \{0,1\}$ be $f_0(u)=1$ if $u\in M$ and $f_0(u)=0$ if $u\notin M$. 
\begin{enumerate}
\item{Case (I):}
	\begin{align*}
        \hat{E}_u &= 2|\alpha_{u,M}\rangle \langle \alpha_{u,M}|-\bs{1}_{\mathcal{C}_u}, 
        \end{align*}
where $\alpha_{u,M}$ is 
	\begin{align*} 
        \alpha_{u,M}=f_0(u)\alpha_u.  
        \end{align*}
Let $\mathcal{A}_M\subset \ell^2(A)$ be the subspace spanned by the target space as follows:
	\[ \mathcal{A}_M=\mathrm{span}\{ \alpha_u \;|\; u\in M \}. \]
The evolution operator $\hat{U}_M=\hat{S}\hat{C}_M: \ell^2(A)\to \ell^2(A)$ with $\hat{C}_M=\oplus_{u\in V}\hat{E}_u$ 
and the flip-flop shift operator $\hat{S}$ is reexpressed by 
	\[ \hat{U}_M = \hat{U}_{\emptyset}(\Pi_{\mathcal{A}_M^\perp}-\Pi_{\mathcal{A}_M}), \]
where $\hat{U}_{\emptyset}$ is the unitary operator replacing $\alpha_{u,M}$ with a unit vector $\alpha_u(\neq 0)$ for every $u\in V$,
since $\hat{C}_M=\hat{C}_\emptyset-(\hat{C}_{\emptyset}-\hat{C}_M)=\hat{C}_{\emptyset}-2\Pi_{\mathcal{A}_M}=\hat{C_\phi}(1-2\Pi_{\mathcal{A}_M})
=\hat{C_\phi}(\Pi_{\mathcal{A}_M^\perp}-\Pi_{\mathcal{A}_M})$.
\item{Case (II):}
Another natural way of extending is as follows. Let $\gamma_u$ be a unit vector on $\mathcal{C}_u$. Then we define
	\begin{align*}
        \hat{E}_u = (-1)^{f_0(u)}\left( 2|\gamma_u\rangle \langle \gamma_u|-\bs{1}_{\mathcal{C}_u} \right). 
        \end{align*}
\end{enumerate}

\

\noindent{\bf Remark on a vertex based formulation} \\
There is a vertex-based formulation when the coined walk on multigraphs is based on arcs. 
The vertex-based formulation of the coined walk is quite useful when we consider the quantum walk 
on a $d$-dimensional torus lattice $\mathbb{T}_d$ or an infinite lattice $\mathbb{Z}_d$. This formulation is rather familiar for some researchers in the area of quantum walks. 
However, the efficiency of this formulation seems to be restricted to at most a regular multigraph. 
Here we consider only a regular lattice as the graph $G=(V,E)$ for a simplicity. 
Let us consider the Hilbert space 
	\[ \mathcal{H}'=\ell^2(V;\mathbb{C}^{2d})=\{ \psi:V\to \mathbb{C}^{2d} \;|\; \sum_{x\in \mathbb{Z}^d}||\psi(x)||^2_{\mathbb{C}^{2d}}<\infty \}.  \]
The inner product of $\ell^2(\mathbb{Z}^d;\mathbb{C}^{2d})$ is 
	\[ \langle \psi,\varphi \rangle_{\mathcal{H}'}=\sum_{x\in V} \langle \psi(x),\varphi(x) \rangle_{\mathbb{C}^{2d}}. \] 
The dimension of the internal space $\mathbb{C}^{2d}$ corresponds to the direction $\bs{e}_1,-\bs{e}_1,\dots,\bs{e}_{d},-\bs{e}_d$, 
where $\bs{e}_1=[1,0,\dots,0]^T,\;\bs{e}_2=[0,1,\dots,0]^T,\dots,\bs{e}_d=[0,0,\dots,1]^T\in \mathbb{Z}^d$. 
We set the complete orthogonal system of $\mathbb{C}^{2d}$ by $\{|j\rangle,|-j\rangle \;|\; j=1,\dots, d \}$ with 
	\begin{align*} 
	|-1\rangle &= [1,0,\dots,0,0]^T,\; |1\rangle=[0,1,\dots,0,0]^T, \cdots \\
	\cdots, |-d\rangle &= [0,0,\dots,1,0]^T, \; |d\rangle=[0,0,\dots,0,1]^T\in \mathbb{C}^{2d}.
	\end{align*}
The evolution operator $\hat{U}_V=\hat{S}_V\hat{C}_V$ is expressed as 
	\[ (\hat{S}_V\psi)(x)=[\psi_{-1}(x+\bs{e}_1),\psi_{1}(x-\bs{e}_1),\dots,\psi_{-d}(x+\bs{e}_d),\psi_d(x-\bs{e}_d)]^T, \]
for $\psi(x)=[\psi_{-1}(x),\psi_{1}(x),\dots,\psi_{-d}(x),\psi_d(x)]^T$. 

Let $\hat{C}':V\to \{\; 2d$-dimensional unitary operators $\}$. Then 
	\[ (\hat{C}_V\psi)(x)=\hat{C}'(x)\psi(x).  \]
We have the following expression which is a derivation that shows why quantum walks are called quantum analogue of random walks:
	\[ (\hat{U}_V\psi)(x)=\sum_{j=1}^d \hat{P}_{j}(x-\bs{e}_j)\psi(x-\bs{e}_j)+\hat{P}_{-j}(x+\bs{e}_j)\psi(x+\bs{e}_j), \]
where $\hat{P}_{\pm j}(x)=|\pm j\rangle\langle \pm j|\hat{C}'(x)$. 

Each arc $a$ with $t(a)=x$ of $G$ is labeled by 
	\[ a=\begin{cases} (x;-j), & \text{if $o(a)=x+\bs{e}_j$,} \\ (x;j), & \text{if $o(a)=x-\bs{e}_j$,} \end{cases} \]
for $j=1,\dots,d$. 
We define the unitary map from the vertex based space $\ell^2(V;\mathbb{C}^{2d})$ to the arc based space $\ell^2(A)$ as follows: 
	\[  (\Gamma \phi)(x;j)=\langle j,\phi(x)\rangle_{\mathbb{C}^{2d}} \]
for $j\in\{\pm 1,\dots,\pm d\}$. 
The inverse map $\Gamma^{-1} : \ell^2(A)\to \ell^2(V;\mathbb{C}^{2d})$ is 
        \[ (\Gamma^{-1}\psi)(x)=[ \psi(x;-1),\psi(x;1),\dots,\psi(x;-d),\psi(x;d) ]^T. \]
\begin{proposition}
For any vertex-based formulation $\hat{U}_V$, there exists an arc-based formulation $\hat{U}_A$ such that 
	\[ \hat{U}_V=\Gamma^{-1} \hat{U}_A \Gamma. \]
\end{proposition}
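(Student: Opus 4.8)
The plan is to \emph{define} the arc-based operator as the conjugate $\hat{U}_A := \Gamma\,\hat{U}_V\,\Gamma^{-1}$, so that the required identity $\hat{U}_V = \Gamma^{-1}\hat{U}_A\Gamma$ holds tautologically, and then to verify that $\hat{U}_A$ really is of the coined (arc-based) form $\hat{U}_A = \hat{F}\hat{E}$. The first ingredient I would establish is that $\Gamma$ is unitary with the stated inverse. Since $\{|j\rangle\}_{j=\pm1,\dots,\pm d}$ is an orthonormal basis of $\mathbb{C}^{2d}$ and the arcs with terminus $x$ are exactly the labels $\{(x;j)\}_{j=\pm1,\dots,\pm d}$, the map $\phi\mapsto\Gamma\phi$ carries the fibre $\mathbb{C}^{2d}$ over $x$ isometrically onto $\mathcal{C}_x = \mathrm{span}\{\delta_{(x;j)}\}$; orthogonality across distinct termini then gives $\|\Gamma\phi\| = \|\phi\|$, and $\Gamma^{-1}$ as written is a genuine two-sided inverse.

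Next I would transport the two factors separately, writing $\hat{C} := \Gamma\hat{C}_V\Gamma^{-1}$ and $\hat{S} := \Gamma\hat{S}_V\Gamma^{-1}$, so that $\hat{U}_A = \hat{S}\hat{C}$. Because $\hat{C}_V$ acts fibrewise as the unitary $\hat{C}'(x)$ and $\Gamma$ maps the fibre over $x$ onto $\mathcal{C}_x$, the operator $\hat{C}$ is block-diagonal, $\hat{C} = \bigoplus_{x\in V}\hat{E}_x$ with each $\hat{E}_x$ unitary on $\mathcal{C}_x$; this is exactly a coin operator (local with respect to the terminus partition $\pi_1$). So the coin factor presents no difficulty.

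The hard part will be the shift. A direct computation from $(\Gamma\phi)(x;j)=\phi_j(x)$ shows that $\hat{S}=\Gamma\hat{S}_V\Gamma^{-1}$ is the \emph{straight} (moving) shift, e.g.\ $\hat{S}\delta_{(x;-k)}=\delta_{(x-\bs{e}_k;-k)}$ and $\hat{S}\delta_{(x;+k)}=\delta_{(x+\bs{e}_k;+k)}$, which is not the flip-flop shift $\hat{F}$ (defined by $\hat{F}\delta_a=\delta_{\bar a}$) used in the coined model and is not local on the edge partition $\pi_2$. To reconcile the two I would compute $T := \hat{F}\hat{S}$ and check, label by label, that it sends $(x;\pm k)\mapsto(x;\mp k)$; that is, $T$ fixes the terminus and merely reverses the coin direction, so $T=\bigoplus_{x\in V}T_x$ is itself local on $\pi_1$. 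Using that $\hat{F}$ is an involution, this yields $\hat{S}=\hat{F}T$, whence $\hat{U}_A=\hat{S}\hat{C}=\hat{F}\,(T\hat{C})$. Setting $\hat{E}:=T\hat{C}$, which is again a direct sum over termini (a product of two $\pi_1$-local unitaries) and hence a bona fide coin operator, while $\hat{F}$ is the flip-flop shift, exhibits $\hat{U}_A=\hat{F}\hat{E}$ as an arc-based coined walk and finishes the argument.

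The only genuinely delicate verification is this identity $T=\hat{F}\hat{S}=$ ``coin reversal,'' which must be done case-by-case on the labels $(x;\pm k)$ using the origin/terminus bookkeeping; the unitarity of $\Gamma$ and the block structure of $\hat{C}$ are routine. If one is content to call any $\hat{S}\hat{C}$ with $\hat{C}$ a coin an ``arc-based formulation'' (allowing the straight shift rather than insisting on flip-flop), the conjugation $\hat{U}_A=\Gamma\hat{U}_V\Gamma^{-1}$ already does the job and the $T$-step can be omitted.
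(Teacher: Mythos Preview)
Your proposal is correct and follows essentially the same route as the paper. Both arguments hinge on the observation that the moving shift and the flip-flop shift differ by a purely local ``coin-reversal'' swap $(x;\pm k)\mapsto(x;\mp k)$, which can be absorbed into the coin; the only cosmetic difference is that the paper carries out this factorization on the vertex side (introducing an involution $\hat{S}_\sigma$ with $\Gamma\hat{S}_\sigma\Gamma^{-1}$ equal to the flip-flop shift and checking that $\hat{S}_\sigma\hat{S}_V$ is the fibrewise swap $\bigoplus\begin{bsmallmatrix}0&1\\1&0\end{bsmallmatrix}$), whereas you do the equivalent computation directly on the arc side via $T=\hat{F}\hat{S}$.
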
	
\begin{proof}
Let $\hat{S}_\sigma: \ell^2(V;\mathbb{C}^{2d})\to \ell^2(V;\mathbb{C}^{2d})$ be a permutation operator such that
	\[ (\hat{S}_\sigma\varphi)(x)=[\varphi_1(x+\bs{e}_1),\varphi_{-1}(x-\bs{e}_{1}),\dots,\varphi_d(x+\bs{e}_d),\varphi_{-d}(x-\bs{e}_{d})]^T \]
We have 
	\[ (\Gamma \hat{S}_\sigma \Gamma^{-1}\psi)(x;j)=\psi(x-\mathrm{sgn}(j)\bs{e}_j;-j), \]
which implies $\hat{S}_A:=\Gamma \hat{S}_\sigma \Gamma^{-1}$ is the flip-flop shift operator of $\ell^2(A)$, that is, $(\hat{S}_A\psi)(a)=\psi(\bar{a})$. 
Notice that 
	\[ (\hat{S}_\sigma \hat{S}_V\varphi)(x)=\left(\bigoplus_{j=1}^d \begin{bmatrix} 0 & 1 \\ 1 & 0\end{bmatrix}\right) \varphi(x). \]
Combining the above expressions, we have 
	\begin{align} 
        \Gamma \hat{U}_V \Gamma^{-1} 
        	&= (\Gamma \hat{S}_\sigma \Gamma^{-1}) \cdot (\Gamma \hat{S}_\sigma \hat{S}_V \hat{C} \Gamma^{-1}) \\
        	&= \hat{S}_A \hat{C}_A,
        \end{align}
where $\hat{C}_A=\oplus_{x\in V}\hat{C}'(x):\ell^2(A)\to \ell^2(A)$ such that 
	\begin{equation}\label{swap} 
        \hat{C}'(x)=\Gamma \left(\left(\bigoplus_{j=1}^d \begin{bmatrix} 0 & 1 \\ 1 & 0\end{bmatrix}\right)\hat{C}'(x)\right) \Gamma^{-1}. 
        \end{equation}
\end{proof}
Notice that as expressed by (\ref{swap}), 
the rows which indicate the positive and negative direction of $\bs{e}_j$ are swapped between the local coin operators 
in the vertex and arc representations $(j=1,\dots,d)$. The shift operator $\hat{S}_\sigma$ is called the flip-flop shift and $\hat{S}_V$ is called the moving shift. 

\subsection{Staggered walk}
A quantum walk is a staggered walk on a connected simple graph $G=(V,E)$ when it is based on a tessellation cover. A tessellation ${\mathcal{T}}$ is a partition of the graph into cliques\footnote{A clique of $G$ is a set of vertices that induces a complete subgraph of $G$.}, where each partition element is called a polygon. A tessellation cover is a set of tessellations $\{{\mathcal{T}}_1,...,{\mathcal{T}}_k\}$ that covers the graph edges, that is, $\cup_{\ell=1}^k {\mathcal{E}}({\mathcal{T}}_\ell)=E$, where ${\mathcal{E}}({\mathcal{T}})$ is the set of edges of tessellation ${\mathcal{T}}$. An edge belongs to a tessellation if the vertices incident to the edge belongs to the same polygon. When the tessellation cover has size $k$, the graph is called $k$-tessellable~\cite{Por1,Por2}. Given a graph $G$, an interesting problem in graph theory is to determine the minimum size of a tessellation cover of $G$.

In this work we address only 2-tessellable staggered walks. A graph $G$ is 2-tessellable if and only if the clique graph $K(G)$ is 2-colorable~\cite{Por2}. It is known that a graph $G$ has a 2-colorable clique graph if and only if $G$ is the line graph of a bipartite multigraph~\cite{Peterson}. Then, in our case, $G$ is the line graph of a bipartite multigraph. Notice that the line graph of a bipartite multigraph is a simple graph.

Suppose that graph $G=(V,E)$ admits a tessellation cover $\{{\mathcal{T}}_1,{\mathcal{T}}_2\}$ of size 2. Let ${\mathcal{T}}_1=\{K_1,...,K_{|{\mathcal{T}}_1|}\}$, where each $K_p$ is a polygon of ${\mathcal{T}}_1$ and $|{\mathcal{T}}_1|$ is the number of polygons in ${\mathcal{T}}_1$.  Besides, $K_p$ is a clique and $K_p\cap K_{p'}=\emptyset$ for $p\not=p'$ and $\cup_{p=1}^{|{\mathcal{T}}_1|} K_p=V$. Likewise, ${\mathcal{T}}_2=\{K_1' ,...,K_{|{\mathcal{T}}_2|}'\}$, where the set $\{K_1' ,...,K_{|{\mathcal{T}}_2|}'\}$ is a second partition of the graph into cliques. The tessellation union must cover the graph edges, that is, ${\mathcal{E}}({\mathcal{T}}_1)\cup {\mathcal{E}}({\mathcal{T}}_2)=E$, where ${\mathcal{E}}({\mathcal{T}}_1)=\cup_{p=1}^{|{\mathcal{T}}_1|} E(K_p)$ and ${\mathcal{E}}({\mathcal{T}}_2)=\cup_{p=1}^{|{\mathcal{T}}_2|} E(K_p')$.

In the framework of two-partition walks, we set $\Omega=V$ and define $\pi_1$ and $\pi_2$ as 
	\begin{align*}
        u \overset{\pi_1}{\sim} v & \,\,{\Longleftrightarrow}\,\, \exists p\in {\mathcal{T}}_1 \mathrm{\;such\; that\;} u,v\in V(K_p),  \\
        u \overset{\pi_2}{\sim} v & \,\,{\Longleftrightarrow}\,\, \exists q\in {\mathcal{T}}_2 \mathrm{\;such\; that\;} u,v\in V(K_q'). 
        \end{align*} 
The respective quotient sets of $V$ by $\pi_1$ and $\pi_2$ are
	\begin{align*}
        V/\pi_1 &= \{ C_p \;|\; p\in {\mathcal{T}}_1  \} \cong {\mathcal{T}}_1, \\
        V/\pi_2 &= \{ D_q \;|\; q\in {\mathcal{T}}_2  \}\cong {\mathcal{T}}_2,
        \end{align*}
where $C_p=\{u\in V \;|\; u\in V(K_p) \}$ and $D_q=\{u\in V \;|\; u\in V(K_q) \}$. 
Thus, the staggered walk is determined by $(G;{\mathcal{T}}_1,{\mathcal{T}}_2;\hat{U})$, 
where $G$ is a 2-tessellable simple graph; 
${\mathcal{T}}_1$ and ${\mathcal{T}}_2$ are tessellations of $G$; and the evolution operator is $\hat{U}=\hat{F}\,\hat{E}$, where $\hat{E}=\oplus_{p\in |{\mathcal{T}}_1|}\hat{E}_p$ and
$\hat{F}=\oplus_{q\in |{\mathcal{T}}_2|} \hat{F}_q$. 

The association between a tessellation and a unitary operator in the staggered model is performed in the way described in~\cite{Por1,Por3}. Here we extend this connection. Consider tessellation ${\mathcal{T}}_1$, which is the set of polygons $K_p$ for $1\le p \le |{\mathcal{T}}_1|$ and tessellation ${\mathcal{T}}_2$, which is the set of polygons $K_q'$ for $1\le q \le |{\mathcal{T}}_2|$. A polygon $K_p$ must be associated with a Hermitian operator $\hat{H}_p$ in the Hilbert space ${\mathcal{C}_p}$ spanned by the vertices of $K_p$ and, likewise, a polygon $K_q'$ must be associated with a Hermitian operator $\hat{H}_p'$ in the Hilbert space ${\mathcal{D}_q}$ spanned by the vertices of $K_q'$. Any choice of $\hat{H}_p$ and $\hat{H}_p'$ is acceptable as long as $\hat{H}_p$ and $\hat{H}_p'$ are Hermitian. A natural way to choose $\hat{H}_p$ and $\hat{H}_p'$ is to use a classical Markov chain with symmetric transition matrix or to use the adjacency matrix $A$ of $G$. $\hat{H}_p$ and $\hat{H}_q'$ are obtained from $A$ by deleting the lines and columns of $A$ associated with the vertices $V\setminus K_p$ and $V\setminus K_q'$, respectively. Notice that the Markov chain and the staggered walk are defined on the same graph $G$. Following \cite{Por3}, the local unitary operators $\{\hat{E}_p\}_{p\in {\mathcal{T}}_1}$ and $\{\hat{F}_q\}_{q\in {\mathcal{T}}_2}$ 
are defined as
	\begin{align*}
        \hat{E}_p &= \exp(i\theta_1 \hat{H}_p), \\
        \hat{F}_q &= \exp(i\theta_2 \hat{H}_q'),
        \end{align*}
where $\theta_1$ and $\theta_2$ are angles. 

An interesting form for operators $\hat{H}_p$ and $\hat{H}_q'$ discussed in \cite{Por3} is
	\begin{align*}
        \hat{H}_p &= 2|\alpha_p\rangle \langle \alpha_p|-\bs{1}_{\mathcal{C}_p}, \\
        \hat{H}_q' &= 2|\beta_q\rangle \langle \beta_q|-\bs{1}_{\mathcal{D}_q},
        \end{align*}
where $|\alpha_p\rangle$ and $|\beta_q\rangle$ are unit vectors in ${\mathcal{C}_p}$ and ${\mathcal{D}_q}$, respectively. Notice that in this case
	\begin{align*}
        \hat{E}_p &= \cos(\theta_1)\bs{1}_{\mathcal{C}_p}+i\sin(\theta_1)\hat{H}_p, \\
       \hat{F}_q &= \cos(\theta_2)\bs{1}_{\mathcal{D}_q}+i\sin(\theta_2)\hat{H}_q'.
        \end{align*}

\noindent{\bf Quantum search in the staggered model} \\
One of the most interesting method to search a marked vertex assuming that the graph $G$ has $M$ marked vertices is use the query operator
\begin{equation}
\hat{U}_M=2\sum_{u\in M} |u\rangle\langle u| - \bs{1}_{\mathcal{H}}.
\end{equation}
In this case, the evolution operator is $\hat{U}=\hat{F}\,\hat{E}\, \hat{U}_M$. This method is similar to the one used in the Grover algorithm. There is a slight variation that uses the operator  $\hat{U}=\hat{F}\, \hat{U}_M\,\hat{E}\, \hat{U}_M$. Ref.~\cite{APN} used the query-based method to show an example which is quadratically faster compared to random-walk based algorithms on the same graph.

The query-based search does not directly reproduce the searching method employed in the extended Szegedy model. To exactly reproduce Szegedy's method, it is necessary to introduce the concept of partial tessellations, which was addressed in~\cite{Por1,Por4}. In the staggered model, it is not necessary to modify the graph in order to search for a marked vertex. The concept of partial tessellation exactly reproduce the method that uses sinks in directed multigraphs in Szegedy's model.

\subsection{Quantum walk on hypergraphs}

We propose a quantum walk on hypergraph $H=(V, \mathcal{E})$,
where $V$ is a discrete set called the vertex set and $\mathcal{E}\subseteq 2^{V}$ is called the hyperedge set. 
If $v,u\in e \in \mathcal{E}$, then we say that $u$ and $v$ are adjacent. 
In particular, if we set $\mathcal{E}\subseteq \binom{V}{2}$, then the hypergraph reduces to a graph.
In the framework of two-partition walks, we set 
	\[ \Omega = \mathcal{A}:=\{(e,u) \;|\; e\in \mathcal{E},\; u\in e\},  \]
and define the equivalence relations $\pi_1$ and $\pi_2$ as
	\begin{align*}
        (e,u) \overset{\pi_1}{\sim} (e',u') & \,\,{\Longleftrightarrow}\,\, u=u',  \\
        (e,u) \overset{\pi_2}{\sim} (e',u') & \,\,{\Longleftrightarrow}\,\, e=e'. 
        \end{align*}
The quotient sets are 
	\begin{align*}
        \mathcal{A}/\pi_1 &= \{ \{(e,u) \;|\; \forall e,\; u\in e\} \;|\; u\in V\}, \\
        \mathcal{A}/\pi_2 &= \{ \{(e,u) \;|\; \forall u,\; u\in e\} \;|\; e\in \mathcal{E}\}.
        \end{align*}
When we take $\mathcal{E}\subseteq \binom{V}{2}$, $\mathcal{A}$ is isomorphic to the symmetric arc set $A$ of the graph induced
by the following bijection $\phi: \mathcal{A}\to A$. If $e=\{u,v\}\in \mathcal{E}$, then 
	\[ t(\phi((e,u))) = u,\; o(\phi((e,u)))=v. \]
Thus the inverse is expressed by 
	\[ \phi^{-1}(a)=(|a|,t(a)),\;a\in A. \]

\begin{remark}
Assume that $\mathcal{E}\subseteq \binom{V}{2}$. 
Using the above bijection map, we define $a=\phi(e,u)$, $b=\phi(e',u')$. 
Then, we have 
	\begin{align*}
        (e,u) \overset{\pi_1}{\sim} (e',u') &\,\,\Longleftrightarrow\,\, t(a)=t(b), \\
        (e,u) \overset{\pi_2}{\sim} (e',u') &\,\,\Longleftrightarrow\,\, |a|=|b|.
        \end{align*}
Thus, $a$ and $b$ satisfy the equivalence relations $\pi'_1$ and $\pi'_2$ of the arc set of the graph for the coined walk case in Sec.~3.2. 
This quantum walk is naturally extended from the coined walk on a simple graph to a quantum walk on a hypergraph. 
\end{remark}
\begin{table}
\begin{center}
\begin{tabular}{|c|c|c|c|c|} \hline
Walk & {\bf Bipartite }($\mathcal{B}$) & {\bf Coined}($\mathcal{C}$) & {\bf Staggered}($\mathcal{S}$) & {\bf Hypergraph} \\ 
 & {\small (Sec.~3.1)}                & {\small (Sec.~3.2)}             & {\small (Sec.~3.3)}                & {\small (Sec.~3.4)} \\ \hline
         & edge set          & symmetric arc  & vertex set              & pair of hyperedges  \\ 
$\Omega$ &  of   a  bipartite           & set of  a               & of  a                 & and  its\\ 
         & multigraph & multigraph            & 2-tessellable graph & contained vertex \\ \hline
$\Omega/\pi_1$
         & $X$-end vertices    & terminal vertices  & tessellation ${\mathcal{T}}_1$  & vertices \\ \hline
$\Omega/\pi_2$
         & $Y$-end vertices    & edges             & tessellation ${\mathcal{T}}_2$  & edges \\ \hline
\end{tabular}
\caption{Examples of two-partition quantum walks.}\label{table1}
\end{center}
\end{table}

\section{Unitary equivalence of quantum walks}

Let $\mathcal{P}$, $\mathcal{B}$, and $\mathcal{S}$ be the family of all two-partition walks,  all bipartite walks, and all 2-tessellable staggered walks, respectively. 
The family of all coined walks is denoted by $\mathcal{C}$, which has evolution operator $\hat{\Gamma}=\hat{\Gamma}(G;\{C_u\}_{u\in V})$. 
We also define the family of the two-step coined quantum walks, which is denoted by $\mathcal{C}_2$ and has $\hat{\Gamma}^2$ as evolution operator. 
The two-step coined walk can also be formulated in terms of the two-partition walk model, see Lemma~\ref{lem} for details. 
Table~\ref{table1} summarizes the quantum walk families analyzed in this work.
Each quantum walk model is described by $(L;\hat{\Theta})$, where $L$ is the discrete set $K$ (walker's positions) together with two partitions, and $\hat{\Theta}$ is the evolution operator. 
The evolution operator $\hat{\Theta}$ acts on $\ell^2(K)$. 
Table~\ref{table2} describes $L$, $K$, and $\hat{\Theta}$ for each family of quantum walk model. 
We define an order between the families of quantum walk models as follows. 
\begin{definition}
Assume that $\mathcal{A}\in \{\mathcal{P},\mathcal{B},\mathcal{C},\mathcal{C}_2,\mathcal{S}\}$.
For any quantum walk in $\mathcal{A}$ with the evolution operator $\hat{\Theta}$ that acts on $\ell^2(K)$, 
if there exists a quantum walk in $\mathcal{A}'\in \{\mathcal{P},\mathcal{B},\mathcal{C},\mathcal{C}_2,\mathcal{S}\}$ with evolution operator 
$\hat{\Theta}'$ that acts on  $\ell^2(K')$, 
and an injection map $\eta:K\longrightarrow K'$, such that
	\[ \hat{\Theta}=\mathcal{U}^{-1}_\eta \;\hat{\Theta}' \; \mathcal{U}_\eta, \]
then we denote $\mathcal{A} \prec \mathcal{A}'$. 
Here $\mathcal{U}_\eta:\ell^2({K})\to \ell^2(\eta({K}))$ is the unitary map, that is, $(\mathcal{U}_\eta\psi)(a)=\psi(\eta^{-1}(a))$. 
In particular if the converse also holds, that is, $\mathcal{A}\succ \mathcal{A}'$, then we denote $\mathcal{A} \cong \mathcal{A}'$.
\end{definition}

\begin{table}
\begin{center}
\begin{tabular}{|c|c|c|c|c|c|} \hline
    & $\mathcal{P}$        & $\mathcal{B}$ & $\mathcal{C}$ & $\mathcal{C}_2$ & $\mathcal{S}$ \\ \hline
$L$ & $\Omega;\pi_1,\pi_2$ & bipartite    &  multigraph       &   multigraph      & 2-tessellable \\
    &                      & multigraph  &                    &                    & graph; $\mathcal{T}_1,\mathcal{T}_2$  \\ \hline

   &                       &             & induced            & induced             &      \\               
$K$ & $\Omega$             & edge set     &        symmetric  &      symmetric  & vertex set \\               
    &                      &              & arc set            &  arc set           &            \\ \hline 
$\hat{\Theta}$ & $(\oplus_{i\in \Omega/\pi_2}\hat{F}_i)$           & $(\oplus_{y\in Y}\hat{R}_y)$          & $\hat{S}\cdot(\oplus_{u\in V} \hat{C}_u)$  & $(\hat{S}\cdot(\oplus_{u\in V}\hat{C}_u))^2$   & $(\oplus_{q\in {\mathcal{T}}_2}\hat{F}_q)$ \\
               & $\;\;\cdot(\oplus_{j\in \Omega/\pi_1} \hat{E}_j)$ & $\;\;\cdot(\oplus_{x\in X} \hat{R}_x)$ & $\;\;$               & $\;\;$               & $\;\;\cdot(\oplus_{p\in {\mathcal{T}}_1} \hat{E}_p)$ \\ \hline                            
\end{tabular}
\caption{Family of quantum walks $\mathcal{P}$, $\mathcal{B}$, $\mathcal{C}$, $\mathcal{C}_2$, and $\mathcal{S}$, which are characterized by $(L,\hat{\Theta})$, where $L$ is the discrete set $K$ together with two partitions, 
and $\hat{\Theta}$ is the evolution operator.}\label{table2}
\end{center}
\end{table}

The previous examples in Secs.~3.1, 3.2, and 3.3 show that $\mathcal{B},\mathcal{C},\mathcal{S} \prec \mathcal{P}$, respectively. 
Next, we show ``$\mathcal{B}\succ \mathcal{P}$" in Sec.~4.1, ``$\mathcal{S} \succ \mathcal{P}$" in Sec.~4.2, 
``$\mathcal{B} \prec \mathcal{C}_2$" in Sec.~4.3, and ``$\mathcal{B} \succ \mathcal{C}_2$" in Sec.~4.4. 
See also Fig.~1 for the commutative diagram. 
In this section, we show the following theorem:
\begin{theorem}\label{unitaryeq}
Let $\mathcal{P},\mathcal{B},\mathcal{C},\mathcal{C}_2,\mathcal{S}$ be the families above defined. 
Then, 
\[ \mathcal{C}\prec \mathcal{B}\cong \mathcal{P}\cong \mathcal{S} \cong \mathcal{C}_2. \]
See the commutative diagram in Figs.~\ref{commutativemap} and the injection maps in Table.~\ref{table3}.
\end{theorem}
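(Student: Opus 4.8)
The plan is to break the asserted chain into a handful of pairwise containments in the order $\prec$ and then close the loops by transitivity. The constructions of Secs.~3.1, 3.2 and 3.3 already realize each of the coined, bipartite and staggered models as a two-partition walk, giving $\mathcal{B}\prec\mathcal{P}$, $\mathcal{C}\prec\mathcal{P}$ and $\mathcal{S}\prec\mathcal{P}$. It therefore suffices to establish the four converse/auxiliary statements $\mathcal{P}\prec\mathcal{B}$, $\mathcal{P}\prec\mathcal{S}$, $\mathcal{B}\prec\mathcal{C}_2$ and $\mathcal{C}_2\prec\mathcal{B}$. Granting these, $\mathcal{B}\cong\mathcal{P}$ and $\mathcal{S}\cong\mathcal{P}$ follow at once, and the last pair yields $\mathcal{B}\cong\mathcal{C}_2$. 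Since $\prec$ is transitive (compose the injections $\eta$ and the induced unitaries $\mathcal{U}_\eta$), chaining these gives $\mathcal{B}\cong\mathcal{P}\cong\mathcal{S}\cong\mathcal{C}_2$, while $\mathcal{C}\prec\mathcal{B}$ is read off from $\mathcal{C}\prec\mathcal{P}\cong\mathcal{B}$.

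For $\mathcal{P}\prec\mathcal{B}$, given a two-partition walk $(\Omega;\pi_1,\pi_2;\hat{U})$ I would build its intersection bipartite multigraph: set $X=\Omega/\pi_1$, $Y=\Omega/\pi_2$, and for each $\omega\in\Omega$ draw one edge joining the class $C_i\ni\omega$ to the class $D_j\ni\omega$. As $\omega$ lies in a unique $C_i$ and a unique $D_j$, this produces $G=(X\sqcup Y,E)$ with $E\cong\Omega$, several elements of a single $C_i\cap D_j$ forcing parallel edges — the reason multigraphs are needed. Under this identification $\mathcal{C}_i=\mathcal{C}_x$ and $\mathcal{D}_j=\mathcal{D}_y$, and because a bipartite walk admits arbitrary local unitaries, the operators $\hat{E}_i,\hat{F}_j$ transfer verbatim so that $\hat{R}_Y\hat{R}_X=\hat{F}\hat{E}=\hat{U}$. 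For $\mathcal{P}\prec\mathcal{S}$ I would instead put $V=\Omega$, declare each $\pi_1$-class a polygon of $\mathcal{T}_1$ and each $\pi_2$-class a polygon of $\mathcal{T}_2$, and join two vertices whenever they share a $\pi_1$- or a $\pi_2$-class. This yields a simple graph carrying an explicit size-$2$ tessellation cover, hence a $2$-tessellable staggered walk; the only point to verify is realizability of the prescribed local unitaries, which holds since every unitary on $\mathcal{C}_p$ equals $\exp(i\hat{H}_p)$ for a Hermitian $\hat{H}_p$, precisely the form allowed by the extended association of Sec.~3.3.

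The two-step containments carry the real work. For $\mathcal{B}\prec\mathcal{C}_2$ I would run the coined walk $\hat{\Gamma}=\hat{S}\hat{C}$ on the \emph{same} bipartite multigraph, choosing the coin blocks at the $Y$-vertices to realize $\{\hat{R}_y\}$ and at the $X$-vertices to realize $\{\hat{R}_x\}$. On a bipartite graph the arcs split into those terminating in $X$ and those terminating in $Y$; tracing $\hat{\Gamma}^2=\hat{S}\hat{C}\hat{S}\hat{C}$ shows each subspace is invariant, and on the $X$-terminating subspace, identified with $E$, the restriction equals $\hat{R}_Y\hat{R}_X$, the given bipartite evolution. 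The injection $\eta$ sends each edge to its $X$-terminating arc. Conversely, for $\mathcal{C}_2\prec\mathcal{B}$ I would read $\hat{\Gamma}^2=(\hat{S}\hat{C}\hat{S})\,\hat{C}$ on a general multigraph as a product of two block-diagonal operators: $\hat{C}$ is block-diagonal over the terminus partition of the arcs, while $\hat{S}\hat{C}\hat{S}$ is block-diagonal over the origin partition. Viewing each arc $a$ as an edge joining $o(a)$ to $t(a)$ in the bipartite double $(V_{\mathrm{o}}\sqcup V_{\mathrm{t}},A)$, these two partitions become the two end-partitions of a bipartite multigraph and the two blocks become its local operators, so $\hat{\Gamma}^2=\hat{F}\hat{E}$ is a bipartite walk; this is the content of Lemma~\ref{lem}.

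The main obstacle I expect is the bookkeeping inside the two-step coined equivalences: one must check carefully that the $X$-terminating arc subspace is genuinely $\hat{\Gamma}^2$-invariant and that the conjugated block $\hat{S}\hat{C}\hat{S}$ is both unitary and correctly localized with respect to the origin partition, so that the factorization $\hat{\Gamma}^2=\hat{F}\hat{E}$ reproduces the bipartite form exactly rather than merely up to an outer relabeling of the two half-steps. A secondary technical point, handled componentwise, is that the intersection graph and the bipartite double may be disconnected even when the input is irreducible; one then restricts the equivalence to each connected component, since all evolution operators are block-diagonal across components.
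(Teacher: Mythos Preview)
Your proposal is correct and follows essentially the same route as the paper: the inclusions $\mathcal{B},\mathcal{C},\mathcal{S}\prec\mathcal{P}$ from Secs.~3.1--3.3, then the intersection bipartite multigraph for $\mathcal{P}\prec\mathcal{B}$ (Sec.~4.2), the tessellated graph on $V=\Omega$ for $\mathcal{P}\prec\mathcal{S}$ (Sec.~4.1), the coined walk on the same bipartite multigraph with injection $\xi_X$ onto the $X$-terminating arcs for $\mathcal{B}\prec\mathcal{C}_2$ (Sec.~4.3.1), and the bipartite double (the paper's duplicated multigraph) together with the factorization $\hat{\Gamma}^2=(\hat{S}\hat{C}\hat{S})\hat{C}$ of Lemma~\ref{lem} for $\mathcal{C}_2\prec\mathcal{B}$ (Sec.~4.3.2). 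Your explicit remark that an arbitrary local unitary is of the form $\exp(i\hat{H}_p)$, hence admissible in the extended staggered model, is a point the paper uses only implicitly.
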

\begin{figure}[htbp]
  \begin{center}
           \includegraphics[clip, width=7cm]{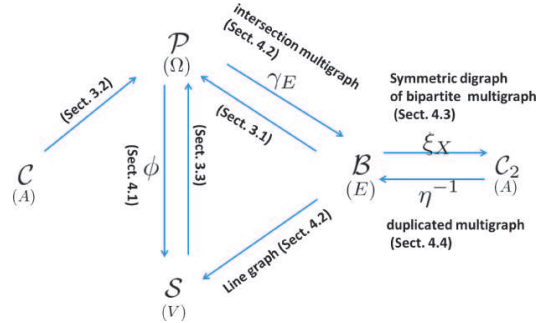}
              \caption{Commutative diagram of $\mathcal{B}$, $\mathcal{P}$, $\mathcal{S}$ and $\mathcal{C}_2$. }
              \label{commutativemap}
  \end{center}
\end{figure}
\begin{table}[htbp]
\begin{center}
\begin{tabular}{|l|} \hline
$\gamma_E:\Omega\to E(\Omega;\pi_1,\pi_2)$ \\
 \quad $X(\gamma_E(\omega))=\gamma_V(C(\omega))$, \; $Y(\gamma_E(\omega))=\gamma_V(D(\omega))$ \\ \hline
$\xi_X:E(G)\to A_X(G)$ \\
 \quad $t(\xi_X(e))=X(e)$, \; $o(\xi_X(e))=Y(e)$ \\ \hline
$\eta^{-1}:A(G)\to E(G_2)$ \\
 \quad $V(\eta^{-1}(a))=t(a)$, \; $V'(\eta^{-1}(a))=o(a)$ \\ \hline 
\end{tabular}
\caption{Maps $\gamma_E$, $\xi_X$, and $\eta^{-1}$ (see also Fig.~2).}
\label{table3}
\end{center}
\end{table}
\begin{figure}[htbp]
  \begin{center}
            \includegraphics[clip, width=7cm]{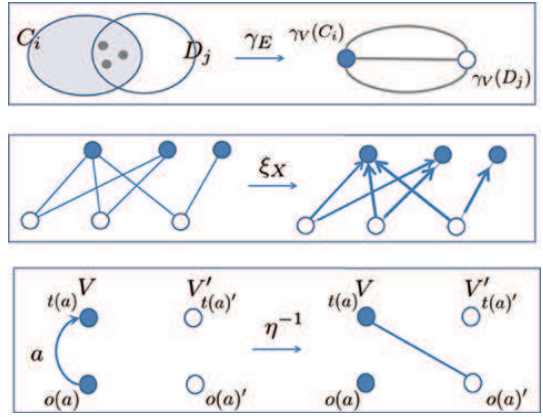}
              \caption{Maps $\gamma_E$, $\xi_X$, and $\eta^{-1}$. }
  \end{center}
\end{figure}

\subsection{Proof of $\mathcal{P} \prec \mathcal{S}$}\label{H}
We define the following simple graph by $H(\Omega;\pi_1,\pi_2)$. 
Here 
\[ V = \phi(\Omega), \]
where $\phi$ is a bijection map from $\Omega$ to $V$. 
For $u,v\in V$ with $u\neq v$, 
\[ \{u,v\}\in E  \Leftrightarrow \phi^{-1}(u)\stackrel{\pi_1}{\sim} \phi^{-1}(v)\mathrm{\;or\;} \phi^{-1}(u)\stackrel{\pi_2}{\sim} \phi^{-1}(v).  \]
It is obvious that this graph is 2-tessellable and tessellation ${\mathcal{T}}_1$ is isomorphic to $\Omega/\pi_1$ and 
tessellation ${\mathcal{T}}_2$ is isomorphic to $\Omega/\pi_2$. 
We have ${\mathcal{T}}_1=\phi(\Omega/\pi_1)$, ${\mathcal{T}}_2=\phi(\Omega/\pi_2)$, 
and $\phi^{-1}(p)=\{\omega \;|\; \phi(\omega)\in p\}\in \Omega/\pi_1$, 
$\phi^{-1}(q)=\{\omega \;|\; \phi(\omega)\in q\}\in \Omega/\pi_1$ for $p\in {\mathcal{T}}_1$, $q\in {\mathcal{T}}_2$. 
Then, we have the following proposition which completes the proof of $\mathcal{P}\prec \mathcal{S}$: 
\begin{proposition}
Given $\hat{U}=\left(\oplus_{j\in [J_2]}\hat{F}_j\right)\left(\oplus_{i\in [J_1]}\hat{E}_i\right) \in \mathcal{P}$ in $(\Omega;\pi_1,\pi_2)$, 
let $H=(V,E)$ be the above 2-tessellable graph. 
Let $\mathcal{U}_\phi: \ell^2(\Omega)\to \ell^2(V)$ 
be a unitary map such that $(\mathcal{U}_\phi \psi)(v)=\psi(\phi^{-1}(v))$. 
Then, there exists $\hat{R}=\left(\oplus_{q\in {\mathcal{T}}_2}\hat{F}'_q\right)\left(\oplus_{p\in {\mathcal{T}}_1}\hat{E}'_p\right)$ on $\ell^2(V)$ under the clique decompositions 
${\mathcal{T}}_1=\phi(\Omega/\pi_1)$ and ${\mathcal{T}}_2=\phi(\Omega/\pi_2)$, 
such that 
	\[ \hat{U}=\mathcal{U}_\phi^{-1}\; \hat{R} \;\mathcal{U}_\phi, \]
which implies $\mathcal{P}\prec \mathcal{S}$. 
Here, $\hat{F}'_q= \mathcal{U}_\phi \hat{F}_{\phi^{-1}(q)}\mathcal{U}_\phi^{-1}$ and $\hat{E}'_p= \mathcal{U}_\phi\hat{E}_{\phi^{-1}(p)}\mathcal{U}_\phi^{-1}$. 
\end{proposition}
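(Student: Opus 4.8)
The plan is to observe that the claimed identity is, at bottom, the statement that conjugation by the unitary $\mathcal{U}_\phi$ transports the whole two-partition structure on $\ell^2(\Omega)$ to the clique structure of $H$ on $\ell^2(V)$, while preserving unitarity, locality, and operator products; since $H$ is built so that the $\pi_1$-classes (resp.\ $\pi_2$-classes) become exactly the polygons of $\mathcal{T}_1$ (resp.\ $\mathcal{T}_2$), nothing is lost in the transport and $\hat{R}$ automatically acquires the staggered form. First I would record the action of $\mathcal{U}_\phi$ on the basis: from $(\mathcal{U}_\phi\psi)(v)=\psi(\phi^{-1}(v))$ one reads off $\mathcal{U}_\phi\delta_\omega=\delta_{\phi(\omega)}$, so $\mathcal{U}_\phi$ is unitary and maps $\mathcal{C}_i=\spann\{\delta_\omega : \omega\in C_i\}$ isometrically onto $\spann\{\delta_v : v\in\phi(C_i)\}$, which by the definition of $H$ is exactly the polygon subspace $\mathcal{C}_p$ with $p=\phi(C_i)\in\mathcal{T}_1$; the analogous statement holds for $\pi_2$ and $\mathcal{T}_2$.

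Next I would verify that the conjugated operators are legitimate staggered building blocks. For $p\in\mathcal{T}_1$ set $\hat{E}'_p=\mathcal{U}_\phi\hat{E}_{\phi^{-1}(p)}\mathcal{U}_\phi^{-1}$. Being a unitary conjugate of a unitary, $\hat{E}'_p$ is unitary, and its matrix elements obey $\langle\delta_{v'},\hat{E}'_p\delta_v\rangle=\langle\delta_{\phi^{-1}(v')},\hat{E}_{\phi^{-1}(p)}\delta_{\phi^{-1}(v)}\rangle$; by the locality of $\hat{E}_{\phi^{-1}(p)}$ this vanishes unless $\phi^{-1}(v),\phi^{-1}(v')\in\phi^{-1}(p)$, i.e.\ unless $v,v'\in p$. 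Hence $\hat{E}'_p$ is a local unitary supported on the polygon $K_p$, exactly as the staggered model demands, and the same computation gives the corresponding property for each $\hat{F}'_q$. Since any unitary on a finite clique space can be written as $\exp(i\hat{H})$ with $\hat{H}$ Hermitian and supported on that clique, each $\hat{E}'_p$ and $\hat{F}'_q$ indeed qualifies as a staggered local operator.

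Finally I would assemble the pieces. Because the $\pi_1$-classes are carried bijectively onto the polygons of $\mathcal{T}_1$ and conjugation commutes with the orthogonal direct sum over a family of mutually orthogonal invariant subspaces, one obtains $\mathcal{U}_\phi\hat{E}\mathcal{U}_\phi^{-1}=\oplus_{p\in\mathcal{T}_1}\hat{E}'_p$ and $\mathcal{U}_\phi\hat{F}\mathcal{U}_\phi^{-1}=\oplus_{q\in\mathcal{T}_2}\hat{F}'_q$. Multiplicativity of conjugation then gives
\[ \mathcal{U}_\phi\hat{U}\mathcal{U}_\phi^{-1}=\left(\mathcal{U}_\phi\hat{F}\mathcal{U}_\phi^{-1}\right)\left(\mathcal{U}_\phi\hat{E}\mathcal{U}_\phi^{-1}\right)=\left(\oplus_{q\in\mathcal{T}_2}\hat{F}'_q\right)\left(\oplus_{p\in\mathcal{T}_1}\hat{E}'_p\right)=\hat{R}, \]
and rearranging yields $\hat{U}=\mathcal{U}_\phi^{-1}\hat{R}\mathcal{U}_\phi$. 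Taking the injection $\eta=\phi$ in the definition of $\prec$ then establishes $\mathcal{P}\prec\mathcal{S}$.

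As for the main obstacle, there is essentially no analytic difficulty: everything reduces to conjugation by a unitary, which trivially preserves unitarity and products. The single point requiring genuine care is the bookkeeping showing that the support of $\hat{E}'_p$ lies inside the polygon $K_p=\phi(C_i)$ and not in some larger set. This is precisely where the defining edge rule of $H$ (adjacency iff $\pi_1$- or $\pi_2$-equivalence) enters: it guarantees that the images of the partition elements are genuine cliques, so that ``local on $C_i$'' transfers verbatim to ``local on the polygon $K_p$,'' and thus that $\hat{R}$ truly belongs to $\mathcal{S}$.
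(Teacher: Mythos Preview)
Your proposal is correct and follows essentially the same route as the paper: both proofs reduce the claim to checking that each conjugated block $\mathcal{U}_\phi\hat{E}_j\mathcal{U}_\phi^{-1}$ is a local unitary on $\spann\{\delta_v : v\in\phi(C_j)\}$, and both verify this by the same matrix-element computation using $\mathcal{U}_\phi\delta_\omega=\delta_{\phi(\omega)}$ together with the locality of $\hat{E}_j$. Your write-up is slightly more explicit in assembling the direct sums and in noting the $\exp(i\hat H)$ representation, but the argument is the same.
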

\begin{proof}
We show that $\mathcal{U}_\phi\hat{U}\mathcal{U}_\phi^{-1}$ is an evolution operator of a 2-tessellable staggered walk on $H=(V,E)$ induced by $(\Omega;\pi_1,\pi_2)$. 
The operator $\mathcal{U}_\phi\hat{U}\mathcal{U}_\phi^{-1}$ is a unitary operator on $\ell^2(V)$ since we just take a relabeling the standard bases of 
$\ell^2(\Omega)$ by the bijection map $\phi$. 
Thus, the problem is reduced to show that $\mathcal{U}_\phi\hat{E}_{j}\mathcal{U}_\phi^{-1}$ and $\mathcal{U}_\phi \hat{F}_{i}\mathcal{U}_\phi^{-1}$ are 
local unitary operators on $\spann\{\delta_u \;|\; u\in \phi(C_j)\}$, $\spann\{\delta_u \;|\; u\in \phi(D_i)\}\subset \ell^2(V)$, respectively, 
since $\mathcal{T}_i = \phi(\Omega/\pi_i)$ $(i=1,2)$.  
It is sufficient to show the locality because it is clear that they are unitary. 

We put $u=\phi(\omega)$ and $v=\phi(\omega')$. 
Notice that 
	\[ u\in \phi(C_j) \Leftrightarrow \phi(\omega)\in \phi(C_j) \Leftrightarrow \omega\in C_j.   \]
Using this, we have  
	\begin{align*}
        u\notin \phi(C_j) \;\mathrm{or}\; v\notin \phi(C_j)
        	& \Leftrightarrow \omega\notin C_j \;\mathrm{or}\; \omega'\notin C_j \\
                & \Rightarrow \langle \delta_\omega, \hat{E}_j\delta_{\omega'}  \rangle=0
        \end{align*}
since $\hat{E}_j$ is a local operator on $\spann\{\delta_\omega \;|\; \omega\in C_j\}\subset \ell^2(\Omega)$. 
Thus 
	\begin{align*}
        u\notin \phi(C_j) \;\mathrm{or}\; v\notin \phi(C_j)
        	& \Rightarrow \langle \mathcal{U}^{-1}_\phi\delta_{\phi(\omega)}, \hat{E}_j\mathcal{U}^{-1}_\phi\delta_{\phi(\omega')}  \rangle=0 \\
                & \Leftrightarrow \langle \delta_{\phi(\omega)}, \mathcal{U}_\phi\hat{E}_j\mathcal{U}^{-1}_\phi\delta_{\phi(\omega')}  \rangle=0
        \end{align*}
which implies that 
$\mathcal{U}_\phi\hat{E}_{j}\mathcal{U}_\phi^{-1}$ is a local operator on $\spann\{\delta_u \;|\; u\in \phi(C_j)\}$.
\end{proof}
\begin{figure}[htbp]
  \begin{center}
            \includegraphics[clip, width=7cm]{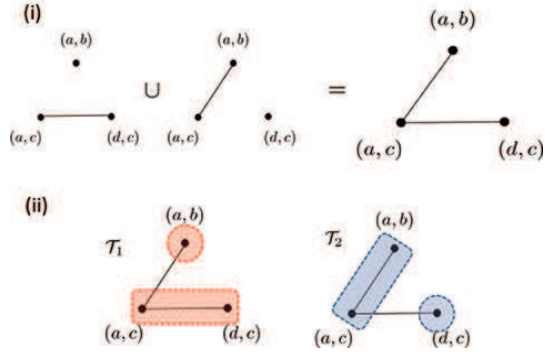}
              \caption{The graph induced by $(\Omega;\pi_1,\pi_2)$ of Example~\ref{ex1}: Figure (i) depicts the subgraphs induced by $\pi_1$ and $\pi_2$, respectively, whose union is the graph associated with Example~\ref{ex1}. 
              Fig.~(ii) depicts tessellations $\mathcal{T}_1$ and $\mathcal{T}_2$. }
  \end{center}
\end{figure}
\subsection{Proof of $\mathcal{P}\prec \mathcal{B}$}
Given a two-partition walk $(\Omega;\pi_1,\pi_2)$ with $\Omega/\pi_1=\{C_i\}$, $\Omega/\pi_2=\{D_j\}$, 
we define $C(\omega)=C_i$ and $D(\omega)=D_j$ for any $\omega\in C_i\cap D_j$. 
\begin{definition}\label{IntersectionGraph}
Let $\Omega$ be a discrete set and $\pi_1,\pi_2$ be partitions, that is, 
$\Omega/\pi_1=\{C_i\}_{i\in [J_1]}$, $\Omega/\pi_2=\{D_j\}_{j\in [J_2]}$. 
The generalized intersection graph induced by $(\Omega;\pi_1,\pi_2)$, 
$G(\Omega;\pi_1,\pi_2)=(X\sqcup Y, E)$, is defined as follows: 
	\[ X=\gamma_V(\Omega/\pi_1),\;Y=\gamma_V(\Omega/\pi_2),  \]
	\[ E=\gamma_E(\Omega). \]
The bijection maps $\gamma_V: \Omega/\pi_1\cup \Omega/\pi_2 \to X\cup Y$ and $\gamma_E: \Omega\to E$ are defined as follows: 
	\[ \gamma_V(C_i)=i,\; \gamma_V(D_j)=j, \] 
	\[X(\gamma_E(\omega))=\gamma_V(C(\omega)),\; Y(\gamma_E(\omega))=\gamma_V(D(\omega)).\] 
\end{definition}
\begin{figure}[htbp]
  \begin{center}
            \includegraphics[clip, width=5cm]{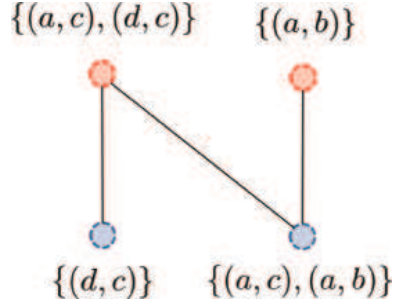}
              \caption{The intersection graph induced by $(\Omega;\pi_1,\pi_2)$ in Example~\ref{ex1}}. 
  \end{center}
\end{figure}
The graph $G(\Omega;\pi_1,\pi_2)$ is a bipartite multigraph; 
the multiplicity between $x\in X$ and $y\in Y$ is described by $|\gamma_V^{-1}(x)\cap \gamma_V^{-1}(y)|$. 
Conversely, given an arbitrary connected bipartite multigraph $G=(X\sqcup Y, E)$, we can induce $(\Omega;\pi_1,\pi_2)$ 
as follows: 
$\Omega=E$, $e\stackrel{\pi_1}{\sim} f \Leftrightarrow X(e)=X(f)$, $e\stackrel{\pi_2}{\sim} f \Leftrightarrow Y(e)=Y(f)$. 
Therefore, the set of all connected bipartite multigraphs and the set of all $(\Omega; \pi_1,\pi_2)$ are isomorphic. 
Using the above bijection map $\gamma_E: \Omega\to E$, we have the following proposition which completes the proof of $\mathcal{P}\prec \mathcal{B}$: 
\begin{proposition}
Let $\mathcal{U}_{\gamma_E}:\ell^2(\Omega)\to \ell^2(E)$ be
	\[ (\mathcal{U}_{\gamma_E}\psi)(e)=\psi(\gamma_E^{-1}(e)), \]
where $E$ is the edge set induced by $(\Omega;\pi_1,\pi_2)$.
Then, for any two-partition walk $ \hat{U} := \hat{U}( \Omega; \pi_1, \pi_2; \allowbreak \{\hat{E}_i\}, \{\hat{F}_j\}) \in \mathcal{P} $, 
there exists a bipartite walk $\hat{W}\in \mathcal{B}$ on the generalized intersection graph $(X\sqcup Y, E)$ of $(\Omega;\pi_1,\pi_2)$ 
with $\{\hat{R}_i\}_{i\in X}$, $\{\hat{R}_j\}_{j\in Y}$ such that
	\[ \hat{U}=\mathcal{U}_{\gamma_E}^{-1}\hat{W}\mathcal{U}_{\gamma_E}. \]
Here $X=\gamma_V(\Omega/\pi_1),\;Y=\gamma_V(\Omega/\pi_2)$ and $\hat{R}_i=\mathcal{U}_{\gamma_E} \hat{E}_i \mathcal{U}_{\gamma_E}^{-1}$,
\;$\hat{R}_j=\mathcal{U}_{\gamma_E} \hat{F}_j \mathcal{U}_{\gamma_E}^{-1}$. 
\end{proposition}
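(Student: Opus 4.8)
The plan is to follow verbatim the strategy already used in the proof of $\mathcal{P}\prec\mathcal{S}$: transport the evolution operator along the unitary relabeling $\mathcal{U}_{\gamma_E}$ and then check that the conjugated local operators retain their locality, this time with respect to the bipartition of the intersection graph. First I would note that since $\gamma_E:\Omega\to E$ is a bijection, $\mathcal{U}_{\gamma_E}$ is unitary, so $\mathcal{U}_{\gamma_E}\hat{U}\mathcal{U}_{\gamma_E}^{-1}$ is automatically unitary on $\ell^2(E)$. Because $\hat{U}=\hat{F}\hat{E}$ with $\hat{E}=\oplus_i\hat{E}_i$ and $\hat{F}=\oplus_j\hat{F}_j$, conjugation distributes over the product and the direct sums, giving
\[
\mathcal{U}_{\gamma_E}\hat{U}\mathcal{U}_{\gamma_E}^{-1}=\Bigl(\bigoplus_j\hat{R}_j\Bigr)\Bigl(\bigoplus_i\hat{R}_i\Bigr),
\]
with $\hat{R}_i,\hat{R}_j$ as defined in the statement. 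It then suffices to identify $\oplus_i\hat{R}_i$ and $\oplus_j\hat{R}_j$ as the two block-diagonal local operators of a genuine bipartite walk on $G(\Omega;\pi_1,\pi_2)$.

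The central step is the locality bookkeeping. For a bipartite walk the first local operator must be block diagonal with respect to the $X$-vertices, that is, each block acts on $\mathcal{C}_x=\spann\{\delta_e \mid X(e)=x\}$. Using the defining relation $X(\gamma_E(\omega))=\gamma_V(C(\omega))$ from Definition~\ref{IntersectionGraph}, I would verify that $\gamma_E(C_i)=\{e\in E \mid X(e)=\gamma_V(C_i)\}$, so the relabeling sends the class $C_i$ precisely onto the edges incident to the single $X$-vertex $\gamma_V(C_i)$. Consequently the locality of $\hat{E}_i$ on $\spann\{\delta_\omega \mid \omega\in C_i\}$ transfers under conjugation: if $e\notin\gamma_E(C_i)$ or $f\notin\gamma_E(C_i)$, then the preimages $\gamma_E^{-1}(e),\gamma_E^{-1}(f)$ lie outside $C_i$, whence
\[
\langle\delta_e,\hat{R}_i\delta_f\rangle=\langle\delta_{\gamma_E^{-1}(e)},\hat{E}_i\delta_{\gamma_E^{-1}(f)}\rangle=0.
\]
This is the same computation as in the $\mathcal{P}\prec\mathcal{S}$ proof, with $\phi$ replaced by $\gamma_E$ and the clique $\phi(C_i)$ replaced by the edge set $\gamma_E(C_i)$. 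The symmetric argument, using the relation $Y(\gamma_E(\omega))=\gamma_V(D(\omega))$ to identify $\gamma_E(D_j)$ with the edges incident to the $Y$-vertex $\gamma_V(D_j)$, handles $\hat{R}_j$ and $\mathcal{D}_y$.

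I do not expect a genuine obstacle here; the only care needed is to confirm that the intersection graph really is the correct bipartite multigraph, with $X$- and $Y$-partitions indexed by $\Omega/\pi_1$ and $\Omega/\pi_2$ through $\gamma_V$ and with multiplicities $|\gamma_V^{-1}(x)\cap\gamma_V^{-1}(y)|$ as observed after Definition~\ref{IntersectionGraph}. This is precisely what guarantees that the conjugated operator belongs to $\mathcal{B}$ rather than being merely some unitary on $\ell^2(E)$. Once the two block structures are matched to the bipartition, the claimed identity $\hat{U}=\mathcal{U}_{\gamma_E}^{-1}\hat{W}\mathcal{U}_{\gamma_E}$ follows immediately by inverting the conjugation, and $\mathcal{P}\prec\mathcal{B}$ is established.
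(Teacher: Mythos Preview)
Your proposal is correct and follows essentially the same approach as the paper: reduce to checking that $\mathcal{U}_{\gamma_E}\hat{E}_i\mathcal{U}_{\gamma_E}^{-1}$ and $\mathcal{U}_{\gamma_E}\hat{F}_j\mathcal{U}_{\gamma_E}^{-1}$ are local on $\spann\{\delta_e\mid X(e)=\gamma_V(C_i)\}$ and $\spann\{\delta_e\mid Y(e)=\gamma_V(D_j)\}$ respectively, by translating the condition $X(e)\neq\gamma_V(C_i)$ back through $\gamma_E$ to $\omega\notin C_i$ and invoking the locality of $\hat{E}_i$. Your explicit identification $\gamma_E(C_i)=\{e\in E\mid X(e)=\gamma_V(C_i)\}$ and your remark that this mirrors the $\mathcal{P}\prec\mathcal{S}$ argument with $\phi$ replaced by $\gamma_E$ are exactly how the paper proceeds.
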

\begin{proof}
It is sufficient to show that $\mathcal{U}_{\gamma_E} \hat{E}_i \mathcal{U}_{\gamma_E}^{-1}$ and 
$\mathcal{U}_{\gamma_E} \hat{F}_j \mathcal{U}_{\gamma_E}^{-1}$ are local operators on $\spann\{\delta_e \;|\; X(e)=\gamma_V(C_i)\}$ and 
$\spann\{\delta_e \;|\; Y(e)=\gamma_V(D_j)\}\subset \ell^2(E)$. respectively. 
Putting $e=\gamma_E(\omega)$, $f=\gamma_E(\omega')$, we have the following equivalent deformation as follows:
\begin{align*}
X(e)\neq \gamma_V(C_i) \;\mathrm{or}\; X(f)\neq \gamma_V(C_i) 
	 & \Leftrightarrow \gamma_V(C(\omega)) \neq \gamma_V(C_i) \;\mathrm{or}\; \gamma_V(C(\omega')) \neq \gamma_V(C_i) \\
         & \Leftrightarrow C(\omega) \neq C_i \;\mathrm{or}\; C(\omega') \neq C_i \\
         & \Leftrightarrow \omega\notin C_i \;\mathrm{or}\; \omega'\notin C_i
\end{align*}
Since $\hat{E}_i$ is a local operator on $\spann\{\delta_\omega \;|\; \omega\in C_i\}$, then 
\begin{align*}
X(e)\neq \gamma_V(C_i) \;\mathrm{or}\; X(f)\neq \gamma_V(C_i) 
	& \Rightarrow \langle \delta_\omega,\hat{E}_i\delta_{\omega'} \rangle=0 \\
        & \Leftrightarrow \langle \mathcal{U}^{-1}_{\gamma_E}\delta_{\gamma_E(\omega)},\hat{E}_i\mathcal{U}^{-1}_{\gamma_E}\delta_{\gamma_E(\omega')} \rangle=0 \\
        & \Leftrightarrow \langle \delta_{e},\mathcal{U}_{\gamma_E}\hat{E}_i\mathcal{U}^{-1}_{\gamma_E}\delta_{f} \rangle=0.
\end{align*}
Therefore $\mathcal{U}_{\gamma_E} \hat{E}_i \mathcal{U}_{\gamma_E}^{-1}$ is a local operator on $\spann\{\delta_e \;|\; X(e)=\gamma_V(C_i)\}$. 
In the same way, we can show that $\mathcal{U}_{\gamma_E} \hat{F}_j \mathcal{U}_{\gamma_E}^{-1}$ is a local operator on $\spann\{\delta_e \;|\; Y(e)=\gamma_V(D_j)\}$.
\end{proof}

From Secs. 4.1 and~4.2, we obtain automatically the equivalence relation between $\mathcal{B}$ and $\mathcal{S}$. 
The line graph of $G=(V,E)$, $L(G)=(V_L,E_L)$, is defined as follows: 
	\[ V_L=E; \]
        \[ E_L=\{\{e,f\} \;|\; e\cap f\in V \}. \]
\begin{remark}
The graph $H(\Omega;\pi_1,\pi_2)$ in Sec.~\ref{H} is the line graph of $G(\Omega;\pi_1,\pi_2)$ of Definition~\ref{IntersectionGraph}. 
\end{remark}
\subsection{Proof of $\mathcal{B} \cong \mathcal{C}_2$}\label{GG}
As a preparation for the proof, we reexpress $\mathcal{C}_2$, whose evolution operator is described by two steps of a coined walk, 
in the framework of the two-partition quantum walks, which will be useful for the proof. 
\begin{lemma}\label{lem}
Every two-step coined walk on multigraph $G=(V,E)$ is formulated by a two-partition walk 
$(\Omega;\pi_1,\pi_2;\{\hat{E}_u\}_{u\in V},\{\hat{F}_v\}_{v\in V})$, where $\Omega= A$,
\begin{align*}
a\stackrel{\pi_1}{\sim}b &\Leftrightarrow t(a)=t(b),\\
a\stackrel{\pi_2}{\sim} b &\Leftrightarrow o(a)=o(b), 
\end{align*}
and $\hat{F}_u =\hat{S}\hat{E}_u\hat{S}$, $u\in V$. 
\end{lemma}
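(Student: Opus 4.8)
The plan is to show that the two-step coined evolution operator $\hat\Gamma^2 = (\hat{S}\hat{E})^2$ can be rewritten as a product $\hat{F}\hat{E}$ of two block-diagonal local operators relative to the partitions $\pi_1$ (grouping arcs by terminus) and $\pi_2$ (grouping arcs by origin), with the specific choice $\hat{F}_u = \hat{S}\hat{E}_u\hat{S}$. The key algebraic observation is that the flip-flop shift $\hat{S}$ conjugates $\pi_1$-locality into $\pi_2$-locality: since $\hat{S}\delta_a = \delta_{\bar a}$ and $t(\bar a) = o(a)$, the shift carries the equivalence class $C_u = \{a : t(a)=u\}$ onto $D_u = \{a : o(a)=u\}$. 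So the whole proof rests on expanding the square and inserting $\hat{S}^2 = \bs{1}$ in the right place.

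First I would write $\hat\Gamma = \hat{S}\hat{E}$ with $\hat{E} = \oplus_{u\in V}\hat{E}_u$ the coin operator (block-diagonal over $\pi_1$), and compute
\[
\hat\Gamma^2 = \hat{S}\hat{E}\,\hat{S}\hat{E} = (\hat{S}\hat{E}\hat{S})\,\hat{E}.
\]
Here I used $\hat{S}^2 = \bs{1}$ implicitly is not even needed; the factorization is immediate. Setting $\hat{F} := \hat{S}\hat{E}\hat{S}$, the candidate evolution operator becomes $\hat\Gamma^2 = \hat{F}\hat{E}$, matching the two-partition form $\hat{U} = \hat{F}\hat{E}$ of the Definition. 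Since $\hat{E} = \oplus_u \hat{E}_u$ and conjugation by $\hat{S}$ distributes over the direct sum, $\hat{F} = \hat{S}(\oplus_u \hat{E}_u)\hat{S} = \oplus_u \hat{S}\hat{E}_u\hat{S} = \oplus_u \hat{F}_u$, which is exactly the prescription $\hat{F}_u = \hat{S}\hat{E}_u\hat{S}$ in the statement.

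Next I would verify the locality claims, i.e.\ that $\hat{E}$ is $\pi_1$-local and $\hat{F}$ is $\pi_2$-local, so that the pair genuinely constitutes a two-partition walk. Locality of $\hat{E}$ is inherited from the coined-walk setup in Sec.~3.2, where $\hat{E}_u$ acts on $\mathcal{C}_u = \spann\{\delta_a : t(a)=u\}$. For $\hat{F}$, I would compute the matrix element $\langle \delta_b, \hat{F}_u\delta_a\rangle = \langle \delta_b, \hat{S}\hat{E}_u\hat{S}\delta_a\rangle = \langle \hat{S}\delta_b, \hat{E}_u\hat{S}\delta_a\rangle = \langle \delta_{\bar b}, \hat{E}_u\delta_{\bar a}\rangle$. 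Since $\hat{E}_u$ is supported on $C_u$ (arcs with terminus $u$), this vanishes unless $t(\bar a)=t(\bar b)=u$, i.e.\ unless $o(a)=o(b)=u$; hence $\hat{F}_u$ is supported on $D_u = \{a : o(a)=u\}$, which is precisely the $\pi_2$-class. Therefore $\hat{F} = \oplus_u \hat{F}_u$ is block-diagonal over $\pi_2$, completing the identification.

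**The main obstacle** is not the algebra, which is essentially a one-line factorization, but rather making the locality bookkeeping airtight: one must check that $\{D_u\}_{u\in V}$ really is the partition $\Omega/\pi_2$ (that every arc has a well-defined origin, and that the classes partition $A$), and that the relabeling $C_u \mapsto D_u$ under $\hat{S}$ is a genuine bijection of partition blocks rather than a coincidence on a subset. This is where I would spend most care, since for multigraphs one must be careful that $\hat{S}$ sends each $\mathcal{C}_u$ isometrically onto $\mathcal{D}_u$ even in the presence of multiple and loop arcs. Once that correspondence is nailed down, the two-partition data $(\Omega;\pi_1,\pi_2;\{\hat{E}_u\},\{\hat{F}_u\})$ with $\hat{U} = \hat{F}\hat{E} = \hat\Gamma^2$ is exactly the claimed formulation.
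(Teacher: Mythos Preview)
Your proposal is correct and follows essentially the same route as the paper: factor $\hat{\Gamma}^2=(\hat{S}\hat{C}\hat{S})\hat{C}$, then verify that conjugation by the flip-flop shift $\hat{S}$ turns the $\pi_1$-locality of $\hat{E}_u$ into the $\pi_2$-locality of $\hat{F}_u=\hat{S}\hat{E}_u\hat{S}$ via the identity $\langle\delta_b,\hat{S}\hat{E}_u\hat{S}\delta_a\rangle=\langle\delta_{\bar b},\hat{E}_u\delta_{\bar a}\rangle$ together with $t(\bar a)=o(a)$. The paper's proof is terser but uses exactly this argument.
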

\begin{proof}
The evolution operator of the two-step coined walk is described by 
	\[ \hat{\Gamma}_2=(\hat{S}\hat{C}\hat{S})\, \hat{C}, \]
where $\hat{S}$ and $\hat{C}=\oplus_{u\in V}\hat{E}_u$ are the shift and coin operators, respectively. 
The coin operator is the direct sum of local unitary operators $\hat{E}_u$. 
Since $\hat{E}_u$ is a local operator on $\mathcal{C}_u$, it holds
	\begin{align*}
        t(a) \neq u\mathrm{\;or\;} t(b) \neq u \Rightarrow \langle \delta_b,\hat{E}_u\delta_a \rangle=0. 
        \end{align*}
This is equivalent to
	\begin{align*}
        o(a) \neq u\mathrm{\;or\;} o(b) \neq u \Rightarrow \langle \delta_{b},\hat{S}\hat{E}_u\hat{S}\delta_{a} \rangle=0,
        \end{align*}
since $\hat{S}$ flips the direction of each arc. 
Therefore $\hat{S}\hat{C}\hat{S}$ follows the decomposition $A/\pi_2$ and 
the local unitary operators $\{\hat{F}_u\}_{u\in V}$ are $\{S\hat{E}_uS\}_{u\in V}$. 
\end{proof}
\subsubsection{Proof of $\mathcal{B}\prec \mathcal{C}_2$}
For given $\hat{W}\in \mathcal{B}$ with $G=(X\sqcup Y,E)$ and $\{\hat{R}_x\}_{x\in X}$, $\{\hat{R}_y\}_{y\in Y}$, 
we will show that $\hat{W}$ is expressed by some $(\oplus_{u\in V}\hat{F}_u)(\oplus_{u\in V}\hat{E}_u)\in \mathcal{C}^2$ using Lemma~\ref{lem}. 

Let $A$ be the set of symmetric arcs induced by $E$ for given bipartite multigraph $G=(X\sqcup Y,E)$. 
We define injection maps $\xi_X,\xi_Y: E\to A$ such that 
	\begin{align*} 
        t(\xi_X(e))\in X, &\;o(\xi_X(e))\in Y, \\
        t(\xi_Y(e))\in Y, &\;o(\xi_X(e))\in X.
        \end{align*}
Setting $A_X=\{ a\in A \;|\; t(a)\in X \}$ and $A_Y=\{ a\in A \;|\; t(a)\in Y \}$, we have $\xi_X(E)=A_X$, $\xi_Y(E)=A_Y\subset A$. 
The inverse maps restricted to the domains by $A_X$ and $A_Y$ are $\xi_X^{-1}(a)=|a|$, $\xi_Y^{-1}(a)=|a|$, respectively. 
We define a unitary map $\mathcal{U}_{\xi_Z}: \ell^2(E)\to \ell^2(A_Z)$ by 
	\[ (\mathcal{U}_{\xi_Z}\psi)(a)=\psi(\xi_Z^{-1}(a)) \;(Z=X,Y).\]
\begin{figure}[htbp]
  \begin{center}
            \includegraphics[clip, width=5cm]{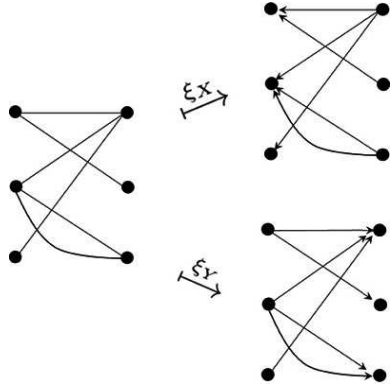}
              \caption{The maps $\xi_X$ and $\xi_Y$.} 
  \end{center}
\end{figure}
Using these unitary maps, we obtain the following proposition which implies $\mathcal{B} \prec \mathcal{C}_2$. 
\begin{proposition}
For any bipartite walk $\hat{W}=\hat{R}_Y'\hat{R}_X\in \mathcal{B}$ on a connected bipartite multigraph $G=(X\sqcup Y, E)$, 
let the unitary map $\mathcal{U}_{\xi_Z}: \ell^2(E)\to \ell^2(A_Z)$ be as above described $(Z=X,Y)$.
Then, there exists a coined walk $\hat{U}\in \mathcal{C}$ in $\ell^2(A)$ such that 
	\[ \hat{W}=\mathcal{U}_{\xi_X}^{-1}\hat{U}^2\mathcal{U}_{\xi_X}, \]
where $\hat{U}$ is the evolution operator of a coined quantum walk on $\ell^2(A)$ so that $\hat{U}=\hat{S}\hat{C}$ 
with $\hat{C}= \mathcal{U}_{\xi_X} \hat{R}_X \mathcal{U}_{\xi_X}^{-1} \oplus \mathcal{U}_{\xi_Y} \hat{R}_Y \mathcal{U}_{\xi_Y}^{-1}$ under the decomposition 
$\ell^2(A)=\ell^2(A_X) \oplus \ell^2(A_Y)$. 
\end{proposition}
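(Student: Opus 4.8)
The plan is to realize one full step of the bipartite walk $\hat{W}=\hat{R}_Y\hat{R}_X$ as two steps of a single coined walk, exploiting the fact that the flip-flop shift $\hat{S}$ toggles between the two halves $A_X$ and $A_Y$ of the symmetric arc set. First I would verify that the proposed coin is genuine. Under the decomposition $\ell^2(A)=\ell^2(A_X)\oplus\ell^2(A_Y)$, write $\hat{C}=\hat{C}_X\oplus\hat{C}_Y$ with $\hat{C}_X=\mathcal{U}_{\xi_X}\hat{R}_X\mathcal{U}_{\xi_X}^{-1}$ and $\hat{C}_Y=\mathcal{U}_{\xi_Y}\hat{R}_Y\mathcal{U}_{\xi_Y}^{-1}$. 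I would check that this is block diagonal with respect to the terminus partition $\pi_1$ of the coined walk: for $x\in X$ the local block $\hat{R}_x$ lives on $\spann\{\delta_e \;|\; X(e)=x\}$, and $\xi_X$ carries this set bijectively onto $\{a\in A_X \;|\; t(a)=x\}$, so $\hat{C}_X$ is a direct sum of operators local at each $X$-vertex; likewise $\hat{C}_Y$ over the $Y$-vertices. Hence $\hat{C}=\oplus_{u\in V}\hat{C}_u$ is a legitimate coin and $\hat{U}=\hat{S}\hat{C}\in\mathcal{C}$, so $\hat{U}^2\in\mathcal{C}_2$.

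The engine of the proof is the intertwining relation between the shift and the two embeddings. Since $\xi_X(e)$ and $\xi_Y(e)$ are the two mutually inverse arcs induced by the same edge $e$, and $\hat{S}\delta_a=\delta_{\bar a}$ with $|\bar a|=|a|$, one obtains $\hat{S}\,\mathcal{U}_{\xi_X}=\mathcal{U}_{\xi_Y}$ and $\hat{S}\,\mathcal{U}_{\xi_Y}=\mathcal{U}_{\xi_X}$ as maps $\ell^2(E)\to\ell^2(A_Y)$ and $\ell^2(E)\to\ell^2(A_X)$, respectively. I would establish these by evaluating both sides on each $\delta_e$.

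With the coin validated and the intertwiner in hand, I would compute $\hat{U}^2=\hat{S}\hat{C}\hat{S}\hat{C}$ on the subspace $\ell^2(A_X)$. Writing a generic vector there as $\psi=\mathcal{U}_{\xi_X}\phi$ with $\phi\in\ell^2(E)$: applying $\hat{C}_X$ gives $\mathcal{U}_{\xi_X}\hat{R}_X\phi$; applying $\hat{S}$ and using $\hat{S}\mathcal{U}_{\xi_X}=\mathcal{U}_{\xi_Y}$ gives $\mathcal{U}_{\xi_Y}\hat{R}_X\phi$; applying $\hat{C}_Y$ gives $\mathcal{U}_{\xi_Y}\hat{R}_Y\hat{R}_X\phi$; and a final $\hat{S}$, using $\hat{S}\mathcal{U}_{\xi_Y}=\mathcal{U}_{\xi_X}$, returns $\mathcal{U}_{\xi_X}\hat{R}_Y\hat{R}_X\phi=\mathcal{U}_{\xi_X}\hat{W}\phi$. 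Because $\hat{S}$ swaps $A_X$ and $A_Y$ and is applied twice, $\hat{U}^2$ preserves $\ell^2(A_X)$, so this shows that $\hat{U}^2$ acts on $\ell^2(A_X)$ exactly as $\mathcal{U}_{\xi_X}\hat{W}\mathcal{U}_{\xi_X}^{-1}$, i.e. $\hat{W}=\mathcal{U}_{\xi_X}^{-1}\hat{U}^2\mathcal{U}_{\xi_X}$, which is the claim and yields $\mathcal{B}\prec\mathcal{C}_2$.

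I expect the main obstacle to be conceptual bookkeeping rather than any hard estimate: orienting the intertwining relation $\hat{S}\mathcal{U}_{\xi_X}=\mathcal{U}_{\xi_Y}$ correctly (the point being that the flip-flop shift is precisely what glues the two arc-copies of each edge, so the edge degree of freedom of the bipartite walk is faithfully tracked by the coined walk), and confirming that the block-diagonal $\hat{C}$ genuinely respects the coined walk's vertex locality so that $\hat{U}\in\mathcal{C}$. Once those two points are pinned down, the second-power computation is a short telescoping of the two shift identities.
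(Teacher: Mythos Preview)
Your argument is correct and follows the same overall architecture as the paper's proof: define the coin $\hat{C}=\hat{C}_X\oplus\hat{C}_Y$ via the embeddings $\xi_X,\xi_Y$, verify that $\hat{C}$ respects the terminus partition so that $\hat{U}=\hat{S}\hat{C}\in\mathcal{C}$, and then identify $\hat{U}^2|_{\ell^2(A_X)}$ with $\mathcal{U}_{\xi_X}\hat{W}\mathcal{U}_{\xi_X}^{-1}$.

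The computational mechanism differs slightly. The paper does not isolate the intertwining identity $\hat{S}\,\mathcal{U}_{\xi_X}=\mathcal{U}_{\xi_Y}$ explicitly; instead it checks locality of \emph{four} conjugated operators (both $\hat{R}_X$ and $\hat{R}_Y'$ under each of $\mathcal{U}_{\xi_X}$ and $\mathcal{U}_{\xi_Y}$), invokes Lemma~\ref{lem} to recognize the pair $(\hat{E},\hat{F})=(\hat{C},\hat{S}\hat{C}\hat{S})$ as a two-step coined walk, obtains the block form $\hat{U}^2=\mathcal{U}_{\xi_X}\hat{R}_Y\hat{R}_X\mathcal{U}_{\xi_X}^{-1}\oplus\mathcal{U}_{\xi_Y}\hat{R}_X\hat{R}_Y\mathcal{U}_{\xi_Y}^{-1}$, and then projects onto $\ell^2(A_X)$. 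Your route is more self-contained: once $\hat{S}\,\mathcal{U}_{\xi_X}=\mathcal{U}_{\xi_Y}$ is established, the four-step telescoping computation of $\hat{U}^2\mathcal{U}_{\xi_X}\phi$ is immediate and Lemma~\ref{lem} is not needed. The paper's detour through Lemma~\ref{lem} has the advantage of making the two-partition structure of $\hat{U}^2$ visible (and simultaneously giving the $\ell^2(A_Y)$ block), while your intertwiner makes the role of the flip-flop shift completely transparent.
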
 
\begin{proof}
First we show that 
$\mathcal{U}_{\xi_X} \hat{R}_x\mathcal{U}_{\xi_X}^{-1}$ and 
$\mathcal{U}_{\xi_X} \hat{R}_y'\mathcal{U}_{\xi_X}^{-1}$ 
($x\in X$, $y\in Y$) are 
local operators on $\{\delta_a \;|\; t(a)=x\}$, $\{\delta_b \;|\; o(b)=y\}\subset \ell^2(A)$, respectively. 
It holds
	\[
         t(a)= x \Leftrightarrow t(\xi_X(e))=x \Leftrightarrow X(e)=x,
         \]
where we put $a=\xi_X(e)$ and $a'=\xi_X(e')$. 
Using this, we have 
	\begin{align*}
        t(a)\neq x \;\mathrm{or}\; t(a')\neq x 
        	& \Leftrightarrow X(e)\neq x \;\mathrm{or}\; X(e')\neq x \\
                & \Rightarrow \langle \delta_e,\hat{R}_x\delta_{e'} \rangle=0 \\
                & \Leftrightarrow \langle \delta_a,\mathcal{U}_{\xi_X} \hat{R}_x\mathcal{U}_{\xi_X}^{-1}\delta_{a'}\rangle=0. 
        \end{align*}
Thus $\mathcal{U}_{\xi_X} \hat{R}_x\mathcal{U}_{\xi_X}^{-1}$ is a local operator on $\spann\{\delta_a \;|\; t(a)=x\}$. 
In the same way, 
it holds
	\[
         o(b)= y \Leftrightarrow o(\xi_X(e))=y \Leftrightarrow Y(e)=y,
         \]
where we put $b=\xi_X(e)$ and $b'=\xi_X(e')$. 
Using this, we have 
	\begin{align*}
        o(b)\neq y \;\mathrm{or}\; o(b')\neq y 
        	& \Leftrightarrow Y(e)\neq y \;\mathrm{or}\; Y(e')\neq y \\
                & \Rightarrow \langle \delta_e,\hat{R}_y'\delta_{e'} \rangle=0 \\
                & \Leftrightarrow \langle \delta_b,\mathcal{U}_{\xi_X} \hat{R}'_x\mathcal{U}_{\xi_X}^{-1}\delta_{b'}\rangle=0. 
        \end{align*}
Thus $\mathcal{U}_{\xi_X} \hat{R}_y'\mathcal{U}_{\xi_X}^{-1}$ is a local operator on $\spann\{\delta_b \;|\; o(b)=y\}$. 
On the other hand,  in a similar fashion, we can also show that 
$\mathcal{U}_{\xi_Y} \hat{R}_y'\mathcal{U}_{\xi_Y}^{-1}$ and $\mathcal{U}_{\xi_Y} \hat{R}_x\mathcal{U}_{\xi_Y}^{-1}$ are 
local operators on $\{\delta_a \;|\; t(a)=y\}$, $\{\delta_a \;|\; o(a)=x\}\subset \ell^2(A)$, respectively.  
By Lemma~\ref{lem}, setting 
	\begin{align*}
        \hat{E} &:= \mathcal{U}_{\xi_X}\hat{R}_X\mathcal{U}^{-1}_{\xi_X} \oplus \mathcal{U}_{\xi_Y}\hat{R}_Y'\mathcal{U}^{-1}_{\xi_Y}, \\
        \hat{F} &:= \mathcal{U}_{\xi_X}\hat{R}_Y'\mathcal{U}^{-1}_{\xi_X} \oplus \mathcal{U}_{\xi_Y}\hat{R}_X\mathcal{U}^{-1}_{\xi_Y}
        \end{align*}
under the decomposition $\ell^2(A)=\ell^2(A_X)\oplus \ell^2(A_Y)$, we see that $\hat{F}\hat{E}:\ell^2(A)\to\ell^2(A)$ describes an evolution operator of a
two-step coined walk $\hat{U}^2$ on $G$. Therefore
	\begin{align*}
        \hat{U}^2 
        	&= \mathcal{U}_{\xi_X}\hat{R}_Y\hat{R}_X\mathcal{U}_{\xi_X}^{-1} \oplus \mathcal{U}_{\xi_Y}\hat{R}_X\hat{R}_Y\mathcal{U}_{\xi_Y}^{-1} 
        \end{align*}
Putting $\Pi_{\ell^2(A_X)}$ as the projection onto $\ell^2(A)_X$, we have 
	\begin{align*}
        &\hat{U}^2\Pi_{\ell^2(A_X)}
        	= \mathcal{U}_{\xi_X}\hat{R}_Y\hat{R}_X\mathcal{U}_{\xi_X}^{-1} \\
       \Leftrightarrow\; & 
       		\hat{U}^2\Pi_{\ell^2(A_X)}\mathcal{U}_{\xi_X}
        	   = \mathcal{U}_{\xi_X} \hat{R}_Y\hat{R}_X \\
       \Leftrightarrow\; & 
       		\hat{U}^2\mathcal{U}_{\xi_X}
        	   = \mathcal{U}_{\xi_X} \hat{R}_Y\hat{R}_X \\        
       \Leftrightarrow\; & 
       		\mathcal{U}_{\xi_X}^{-1}\hat{U}^2\mathcal{U}_{\xi_X}
        	   = \hat{R}_Y\hat{R}_X=\hat{W}. 
        \end{align*}
Thus, we obtain the desired conclusion.
\end{proof}
\subsubsection{Proof of $\mathcal{B} \succ \mathcal{C}_2$}
Let $G_2=(V_2,E_2)$ be the duplicated multigraph of $G=(V,E)$. 
We call the bijection map from $V\to V'$ by $\phi$, where $V'$ is the copy of $V$, that is, $\phi(v)=v'$ and $\phi^{-1}(v')=v$. 
The end vertex in $V$ is denoted by $V(e)$, and one in $V'$ is denoted by $V'(e)$ for $e\in E_2$. 
The symmetric arc set of $G$ is denoted by $A$. 
The central players are $E_2$ and $A$, and the bijection map $\eta: E_2\to A$ is defined by 
	\[ t(\eta(e))=V(e),\;o(\eta(e))=\phi^{-1}(V'(e)). \]
The inverse map is 
	\[ V(\eta^{-1}(a))=t(a),\;V'(\eta^{-1}(a))=\phi(o(a)).  \]
This is equivalent to that $u$ and $\phi(v)$ is adjacent in $G_2$ if and only if there exists an arc  $a$ such that $t(a)=u$ and $o(a)=v$ in $G$. 
Note that $\eta^{-1}(\bar{a})$ and $\eta^{-1}(a)$ give the following crossing relation:
	\[ \phi(V(\eta^{-1}(a))) = V'(\eta^{-1}(\bar{a})),\;\phi(V(\eta^{-1}(\bar{a}))) = V'(\eta^{-1}(a)). \] 
The unitary map induced by $\eta$, $\mathcal{U}_\eta: \ell^2(E_2) \to \ell^2(A)$, is 
	\[ (\mathcal{U}_\eta\psi)(a)=\psi(\eta^{-1}(a)). \]
Using the bijection map $\eta$, we obtain the following proposition which implies $\mathcal{B} \succ \mathcal{C}_2$.
\begin{figure}[htbp]
  \begin{center}
            \includegraphics[clip, width=5cm]{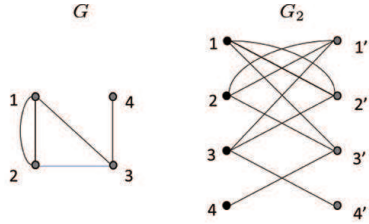}
              \caption{Duplicated multigraph.} 
  \end{center}
\end{figure}
\begin{proposition}
Let $\hat{U}=\hat{S}\hat{C}$ be the coined walk on $G$ with $\hat{C}=\oplus_{u\in V}\{\hat{C}_u\}$. There exists a bipartite walk 
$\hat{W}=\left(\oplus_{v'\in V'}\hat{R}'_{v'}\right)\left(\oplus_{v\in V}\hat{R}_{v}\right)$ on $G_2$ such that 
	\[ \hat{U}^2=\mathcal{U}_\eta \hat{W} \;\mathcal{U}_\eta^{-1}. \]
The local unitary operators of the bipartite walk are  
	\begin{equation}\label{pipi} 
	\hat{R}_u=\mathcal{U}_\eta^{-1}\hat{C}_u\mathcal{U}_\eta,\; \hat{R}_u'=\mathcal{U}_\eta^{-1}\hat{S}\hat{C}_u\hat{S}\mathcal{U}_\eta. 
	\end{equation}
\end{proposition}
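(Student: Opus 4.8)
The plan is to reduce the claim to Lemma~\ref{lem} and then transport the resulting two-partition structure on $A$ to $G_2$ through the unitary $\mathcal{U}_\eta$, exactly in the style of the propositions of Secs.~4.1 and~4.2. First I would invoke Lemma~\ref{lem} to rewrite the two-step evolution operator as
\[ \hat{U}^2 = (\hat{S}\hat{C}\hat{S})\,\hat{C} = \left(\bigoplus_{u\in V}\hat{S}\hat{C}_u\hat{S}\right)\left(\bigoplus_{u\in V}\hat{C}_u\right), \]
where the factor $\bigoplus_u\hat{C}_u$, applied first, is local with respect to the terminus partition $t(a)=t(b)$ of $A$, and the factor $\bigoplus_u\hat{S}\hat{C}_u\hat{S}$, applied second, is local with respect to the origin partition $o(a)=o(b)$.

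The decisive point is that $\eta$ matches these two arc partitions with the two defining partitions of the bipartite walk on $G_2$. Since $V(\eta^{-1}(a))=t(a)$, the terminus class $\{a : t(a)=u\}$ corresponds to the $V$-endpoint class $\{e\in E_2 : V(e)=u\}$; and since $V'(\eta^{-1}(a))=\phi(o(a))$, the origin class $\{a : o(a)=v\}$ corresponds to the $V'$-endpoint class $\{e\in E_2 : V'(e)=\phi(v)\}$. I would record the elementary identity $\mathcal{U}_\eta\delta_e = \delta_{\eta(e)}$, which collapses every matrix element of a conjugated operator into an arc-based one; for example $\langle \delta_f,\hat{R}_u\delta_e\rangle = \langle \delta_{\eta(f)},\hat{C}_u\delta_{\eta(e)}\rangle$.

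With this in hand the locality checks are immediate. For $\hat{R}_u=\mathcal{U}_\eta^{-1}\hat{C}_u\mathcal{U}_\eta$ the right-hand matrix element vanishes unless $t(\eta(e))=t(\eta(f))=u$, i.e.\ unless $V(e)=V(f)=u$, so $\hat{R}_u$ is local on $\mathcal{C}_u=\spann\{\delta_e : V(e)=u\}$; for $\hat{R}_u'=\mathcal{U}_\eta^{-1}\hat{S}\hat{C}_u\hat{S}\mathcal{U}_\eta$, using that $\hat{S}\hat{C}_u\hat{S}$ is local on $\{a : o(a)=u\}$ by Lemma~\ref{lem} together with $o(\eta(e))=\phi^{-1}(V'(e))$, the element vanishes unless $V'(e)=V'(f)=\phi(u)$, so $\hat{R}_u'$ is local on $\mathcal{D}_{\phi(u)}=\spann\{\delta_e : V'(e)=\phi(u)\}$. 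Since $\mathcal{U}_\eta^{-1}$ conjugates a direct sum blockwise and the partition correspondence above carries the blocks onto the $V$- and $V'$-endpoint decompositions of $\ell^2(E_2)$, I obtain
\[ \mathcal{U}_\eta^{-1}\hat{U}^2\mathcal{U}_\eta = \left(\bigoplus_{v'\in V'}\hat{R}'_{v'}\right)\left(\bigoplus_{v\in V}\hat{R}_v\right) = \hat{W}, \]
which is the asserted bipartite walk on $G_2$ and yields $\hat{U}^2=\mathcal{U}_\eta\hat{W}\mathcal{U}_\eta^{-1}$.

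The only step demanding care, and the main (if mild) obstacle, is bookkeeping the crossing produced by the flip-flop shift: one must be sure that conjugating $\bigoplus_u\hat{S}\hat{C}_u\hat{S}$ lands in the $V'$-endpoint decomposition rather than the $V$-endpoint one. This is precisely what the relation $V'(\eta^{-1}(a))=\phi(o(a))$ and the crossing identities $\phi(V(\eta^{-1}(a)))=V'(\eta^{-1}(\bar a))$ guarantee. Everything else is a relabeling of orthonormal bases, so the unitarity of each block, and hence of $\hat{W}$, is automatic.
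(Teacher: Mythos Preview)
Your proposal is correct and follows essentially the same approach as the paper: invoke Lemma~\ref{lem} to write $\hat{U}^2=(\hat{S}\hat{C}\hat{S})\hat{C}$ with blocks local for the terminus and origin partitions of $A$, then use the defining relations $V(\eta^{-1}(a))=t(a)$ and $V'(\eta^{-1}(a))=\phi(o(a))$ to verify that the conjugates $\mathcal{U}_\eta^{-1}\hat{C}_u\mathcal{U}_\eta$ and $\mathcal{U}_\eta^{-1}\hat{S}\hat{C}_u\hat{S}\mathcal{U}_\eta$ are local on the $V$- and $V'$-endpoint classes of $E_2$. Your explicit use of $\mathcal{U}_\eta\delta_e=\delta_{\eta(e)}$ makes the matrix-element computation slightly crisper than the paper's, but the argument is the same.
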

\begin{proof}
We show $\mathcal{U}_\eta^{-1} \hat{U}^2 \;\mathcal{U}_\eta$ is an evolution operator of a bipartite walk on $G_2$. 
By Lemma~\ref{lem}, the two-step coined walk on $G$ is expressed by $\hat{F}\, \hat{C}$, where $\hat{C}$ and $\hat{F}$ are 
direct sums of $\{ \hat{C}_u \}_{u\in V}$ and $\{\hat{F}_u\}_{u\in V}$ following the decompositions of arcset $A$; 
$\sqcup_{u\in V}\{a\in A \;|\; t(a)=u\}$ and $\sqcup_{u\in V}\{a\in A \;|\; o(a)=u\}$, respectively. 
Here $\hat{F}_u=S\hat{C}_uS$ for every $u\in V$. 
First we need to show that $\mathcal{U}_\eta^{-1} \hat{C}_u\mathcal{U}_\eta$ and $\mathcal{U}_\eta^{-1} \hat{F}_u\mathcal{U}_\eta$ are 
local unitary operators on $\spann\{ \delta_{e} \;|\; V(e)=u \}$ and $\spann\{ \delta_{e} \;|\; V'(e)=u' \}$, where $u'$ is the copy of $u$. 
We put $\eta^{-1}(a)=e$ and $\eta^{-1}(b)=f$. 
For $u\in V$, it holds 
	\begin{align*}
        V(e)=u \Leftrightarrow V(\eta^{-1}(a))=u \Leftrightarrow t(a)=u. 
        \end{align*}
Using this, we have 
	\begin{align*}
        V(e)\neq u \;\mathrm{or}\; V(f)\neq u
        	& \Leftrightarrow t(a)\neq u \;\mathrm{or}\; t(b)\neq u \\
                & \Rightarrow \langle \delta_a,\hat{C}_u \delta_b\rangle=0 \\
                & \Leftrightarrow \langle \delta_e,\mathcal{U}_{\eta}^{-1}\hat{C}_u \mathcal{U}_{\eta}\delta_{f}\rangle=0.
        \end{align*}
In the same way, for $u'\in V'$, it holds 
	\begin{align*}
        V'(e)=u' \Leftrightarrow V'(\eta^{-1}(a))=u' \Leftrightarrow o(a)=u. 
        \end{align*}
Using this, we have 
	\begin{align*}
        V'(e)\neq u' \;\mathrm{or}\; V'(f)\neq u'
        	& \Leftrightarrow o(a)\neq u \;\mathrm{or}\; o(b)\neq u \\
                & \Rightarrow \langle \delta_a,\hat{F}_u \delta_b\rangle=0 \\
                & \Leftrightarrow \langle \delta_e,\mathcal{U}_{\eta}^{-1}\hat{F}_u \mathcal{U}_{\eta}\delta_f\rangle=0.
        \end{align*}
Then, $\mathcal{U}_\eta^{-1} \hat{C}_u\mathcal{U}_\eta$ and $\mathcal{U}_\eta^{-1} \hat{F}_u\mathcal{U}_\eta$ are 
local unitary operators on $\spann\{ \delta_{e} \;|\; V(e)=u \}$ and $\spann\{ \delta_{e} \;|\; V'(e)=u' \}$ for every $u\in V$. 
Therefore, by Lemma~\ref{lem}, $\mathcal{U}_\eta^{-1} \hat{F}\hat{C}\mathcal{U}_\eta$ is the evolution operator of a bipartite walk on 
$G_2$. 
This completes the proof. 
\end{proof}
In the rest of this section, we consider a special bipartite multigraph which is a duplicated multigraph. 
\begin{lemma}\label{lem2}
If a bipartite multigraph is the duplicated multigraph of $H$ and the evolution operator of the bipartite walk $\hat{W}$ on this bipartite multigraph is given by 
$\{\hat{R}_v\}_{v\in V}$ and $\{\hat{R}'_{v'}\}_{v'\in V'}$ satisfying
	\[ \langle\delta_e, \hat{R}_v\delta_{f}\rangle = \langle\delta_{e'}, \hat{R}_{v'}'\delta_{f'}\rangle, \]
where $(V(e'))'=V'(e)$ and $(V(f'))'=V'(f)$, then $\hat{W}$ is unitarily equivalent to the two-step coined quantum walk $\hat{U}$ on $H$ as follows:
	\[ \hat{W}=\mathcal{U}_\eta^{-1}\hat{U}^2\mathcal{U}_\eta. \]
Here the local coin operators of $\hat{U}$ are described by $\{\mathcal{U}_\eta\hat{R}_u\mathcal{U}_\eta^{-1}\}_{u\in V}$. 
\end{lemma}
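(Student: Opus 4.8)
The plan is to reverse-engineer the coined walk directly from the data of $\hat{W}$. Motivated by the local operators that appeared in the previous proposition, I would define the candidate local coin operators $\hat{C}_u := \mathcal{U}_\eta\hat{R}_u\mathcal{U}_\eta^{-1}$ for each $u\in V$, set $\hat{C}=\oplus_{u\in V}\hat{C}_u$, and take $\hat{U}=\hat{S}\hat{C}$ as the corresponding coined walk on $H$. First I would check that this is a legitimate coin operator: conjugation preserves unitarity, and since $\hat{R}_u$ is local on $\spann\{\delta_e\mid V(e)=u\}$ while $\eta$ sends $\{e\mid V(e)=u\}$ bijectively onto $\{a\mid t(a)=u\}$ (because $V(\eta^{-1}(a))=t(a)$), the operator $\hat{C}_u$ is local on $\mathcal{C}_u=\spann\{\delta_a\mid t(a)=u\}$, exactly as a coin operator must be.

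By Lemma~\ref{lem} the two-step walk factors as $\hat{U}^2=(\hat{S}\hat{C}\hat{S})\hat{C}=\hat{F}_{\mathrm{coin}}\hat{C}$, where $\hat{F}_{\mathrm{coin}}=\oplus_{u\in V}\hat{S}\hat{C}_u\hat{S}$ follows the decomposition of $A$ by $o(\cdot)$. Conjugating by $\mathcal{U}_\eta^{-1}$ reduces the claim to two identities at the level of local operators,
\[ \mathcal{U}_\eta^{-1}\hat{C}_u\mathcal{U}_\eta=\hat{R}_u \quad\text{and}\quad \mathcal{U}_\eta^{-1}(\hat{S}\hat{C}_u\hat{S})\mathcal{U}_\eta=\hat{R}'_{u'}. \]
The first holds trivially by the definition of $\hat{C}_u$. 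Everything therefore hinges on the second identity, which I expect to be the main obstacle: it is the only place where the compatibility hypothesis enters, and where the flip-flop shift and the crossing relation of $\eta$ must be made to interact correctly.

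To establish the second identity I would compute a generic matrix element $\langle\delta_{e'},\mathcal{U}_\eta^{-1}\hat{S}\hat{C}_u\hat{S}\mathcal{U}_\eta\,\delta_{f'}\rangle$ for $e',f'$ with $V'(e')=V'(f')=u'$. Moving $\mathcal{U}_\eta$ across using $\mathcal{U}_\eta\delta_{g}=\delta_{\eta(g)}$, and using that the flip-flop shift is a self-adjoint involution with $\hat{S}\delta_a=\delta_{\bar a}$, the element collapses to $\langle\delta_{e},\hat{R}_u\,\delta_{f}\rangle$, where $e=\eta^{-1}(\overline{\eta(e')})$ and $f=\eta^{-1}(\overline{\eta(f')})$. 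The crossing relation $\phi(V(\eta^{-1}(a)))=V'(\eta^{-1}(\bar a))$ recorded just before the lemma then yields exactly $(V(e'))'=V'(e)$ and $(V(f'))'=V'(f)$, and simultaneously forces $V(e)=V(f)=u$. These are precisely the indices under which the hypothesis $\langle\delta_e,\hat{R}_v\delta_f\rangle=\langle\delta_{e'},\hat{R}'_{v'}\delta_{f'}\rangle$ applies, so the element equals $\langle\delta_{e'},\hat{R}'_{u'}\delta_{f'}\rangle$; since both operators vanish off their local subspaces, this proves the operator identity. Assembling the two identities through Lemma~\ref{lem} gives $\mathcal{U}_\eta^{-1}\hat{U}^2\mathcal{U}_\eta=(\oplus_{v'}\hat{R}'_{v'})(\oplus_{v}\hat{R}_v)=\hat{W}$, which is the assertion.
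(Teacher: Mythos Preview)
Your proposal is correct and follows essentially the same route as the paper: define $\hat{C}_u=\mathcal{U}_\eta\hat{R}_u\mathcal{U}_\eta^{-1}$, check locality, and then use the crossing relation $\phi(V(\eta^{-1}(a)))=V'(\eta^{-1}(\bar a))$ together with the compatibility hypothesis to identify $\mathcal{U}_\eta^{-1}\hat{S}\hat{C}_u\hat{S}\mathcal{U}_\eta$ with $\hat{R}'_{u'}$ at the level of matrix elements. The paper carries out the same computation on the $\ell^2(A)$ side (showing $\mathcal{U}_\eta\hat{R}_v\mathcal{U}_\eta^{-1}=\hat{S}\,\mathcal{U}_\eta\hat{R}'_{v'}\mathcal{U}_\eta^{-1}\,\hat{S}$), which is the same identity conjugated by $\mathcal{U}_\eta$.
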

\begin{proof}
Similar to the previous proofs, it is easy to show that 
$\mathcal{U}_\eta\hat{R}_v\mathcal{U}_\eta^{-1}$ and $\mathcal{U}_\eta\hat{R}'_{v'}\mathcal{U}_\eta^{-1}$ are local unitary operators on
$\spann\{\delta_a \;|\; t(a)=v\}$ and $\spann\{\delta_a \;|\; o(a)=v\}$, respectively $(v\in V,\;v'\in V')$. 
Moreover for $e,e',f,f'\in E_2$ with $(V(e'))'=V'(e)$ and $(V(f'))'=V'(f)$, the condition 
$\langle \delta_f, \hat{R}_v\delta_e \rangle = \langle \delta_{f'}, \hat{R}_{v'}'\delta_{e'} \rangle $ is equivalent to 
	\[ \langle \mathcal{U}_{\eta}^{-1}\delta_{\eta(f)}, \hat{R}_v \mathcal{U}_\eta^{-1}\delta_{\eta(e)} \rangle 
           = \langle \mathcal{U}_{\eta}^{-1}\delta_{\eta(f')}, \hat{R}_v'\mathcal{U}_\eta^{-1}\delta_{\eta(e')} \rangle. \]
Note that $\eta(e')=\overline{\eta(e)},\;\eta(f')=\overline{\eta(f)}\in A$ hold. Thus putting $a=\eta(e)$ and $b=\eta(f)$, we have 
	\begin{align*}
        & \langle \delta_f, \hat{R}_v\delta_e \rangle = \langle \delta_{f'}, \hat{R}_{v'}'\delta_{e'} \rangle\\
        \Leftrightarrow \; 
        & \langle \mathcal{U}_{\eta}^{-1}\delta_a, \hat{R}_v \mathcal{U}_\eta^{-1}\delta_{b} \rangle 
        = \langle \mathcal{U}_{\eta}^{-1}\delta_{\bar{a}}, \hat{R}_{v'}'\mathcal{U}_\eta^{-1}\delta_{\bar{b}} \rangle \\
        \Leftrightarrow \;
	& \langle \delta_a, \mathcal{U}_{\eta}\hat{R}_v \mathcal{U}_\eta^{-1}\delta_{b} \rangle 
        = \langle \delta_{\bar{a}}, \mathcal{U}_{\eta}\hat{R}_{v'}'\mathcal{U}_\eta^{-1}\delta_{\bar{b}} \rangle \\
        \Leftrightarrow \;
	& \langle \delta_a, \mathcal{U}_{\eta}\hat{R}_v \mathcal{U}_\eta^{-1}\delta_{b} \rangle 
        = \langle \delta_{a}, S\mathcal{U}_{\eta}\hat{R}_{v'}'\mathcal{U}_\eta^{-1}S\delta_{b} \rangle.
        \end{align*}
Therefore, we have shown that $\mathcal{U}_\eta^{-1}\hat{W}\mathcal{U}_\eta$ describes a 2-step coined walk on $H$. 
\end{proof}
Lemma~\ref{lem2} leads to the following corollary:
\begin{corollary}\label{UEQS}
Let $G=(V,E)$ be a connected multigraph and $G_2=(V\sqcup V',E_2)$ be its duplicated multigraph, where $V'$ is the copy of $V$. 
The set of symmetric arcs of $G$ is denoted by $A$. 
The quantum search driven by a bipartite walk on $G_2$ with respect to (\ref{revisedSzeMethod1}) and (\ref{revisedSzeMethod2}) and 
the square of one driven by coined walk on $G$ for case I are unitary equivalent with respect to a unitary map $\mathcal{U}_{\eta}: \ell^2(E_2)\to \ell^2(A)$.
Here the unitary map $\mathcal{U}_\eta$ is denoted as follows: 
\[ (\mathcal{U}_\eta\psi)(a)=\psi(\eta^{-1}(a)), \]
where the bijection map $\eta: E_2\to A$ is 
	\[ t(\eta(e))=V(e),\;o(\eta(e))=\phi^{-1}(V'(e)). \]
\end{corollary}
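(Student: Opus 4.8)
The plan is to read Corollary~\ref{UEQS} as an immediate consequence of Lemma~\ref{lem2}: the extended Szegedy search model of (\ref{revisedSzeMethod1})--(\ref{revisedSzeMethod2}) is already a bipartite walk on the duplicated multigraph $G_2$ of $G$, so the only thing left to do is to check that its pair of local operators satisfies the crossing hypothesis of Lemma~\ref{lem2}, and then to recognize the resulting coin as the Case~(I) coin of Sec.~3.2. First I would fix the setting. As established in Sec.~3.1, splitting off the sink subspace $\ell^2(E_3)$, on which $\hat{W}$ acts as the identity, reduces the search operator to $\hat{R}_{Y,M}\hat{R}_{X,M}$ on $\ell^2(E_2)$ with local pieces $\hat{E}_x=2|\tilde{\alpha}_x\rangle\langle\tilde{\alpha}_x|-\bs{1}_{\mathcal{C}_x}$ and $\hat{F}_y=2|\tilde{\beta}_y\rangle\langle\tilde{\beta}_y|-\bs{1}_{\mathcal{D}_y}$, where $\tilde{\alpha}_x,\tilde{\beta}_y$ obey the three conditions (\ref{pipi1})--(\ref{pipi3}). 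This is exactly a bipartite walk on $G_2$, so I would feed it into Lemma~\ref{lem2} with $\hat{R}_v=\hat{E}_v$ and $\hat{R}'_{v'}=\hat{F}_{v'}$.

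The core step I would carry out is to verify the hypothesis
\[ \langle \delta_e,\hat{R}_v\delta_f\rangle = \langle \delta_{e'},\hat{R}'_{v'}\delta_{f'}\rangle \]
of Lemma~\ref{lem2}, where $e'$ and $f'$ are the crossing partners of $e$ and $f$, i.e. $(V(e'))'=V'(e)$ and $(V(f'))'=V'(f)$. Expanding both reflection operators, I would compute
\[ \langle \delta_e,\hat{E}_v\delta_f\rangle = 2\,\tilde{\alpha}_v(e)\,\overline{\tilde{\alpha}_v(f)}-\delta_{ef}, \qquad \langle \delta_{e'},\hat{F}_{v'}\delta_{f'}\rangle = 2\,\tilde{\beta}_{v'}(e')\,\overline{\tilde{\beta}_{v'}(f')}-\delta_{e'f'}. \]
Condition (\ref{pipi1}) is precisely the statement that $\tilde{\alpha}_v(e)=\tilde{\beta}_{v'}(e')$ and $\tilde{\alpha}_v(f)=\tilde{\beta}_{v'}(f')$ under this crossing, so the rank-one parts agree; and since the crossing $e\mapsto e'$ is a bijection on $E_2$, we have $\delta_{ef}=\delta_{e'f'}$, so the identity parts agree as well. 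Hence the two matrix elements coincide and the hypothesis holds. I expect this bookkeeping --- keeping the roles of $V(\cdot)$, $V'(\cdot)$, the copy map $\phi$, and the crossing relation straight so that (\ref{pipi1}) lines up with the index pattern demanded by Lemma~\ref{lem2} --- to be the only real point of care; there is no deeper obstacle, since conditions (\ref{pipi1})--(\ref{pipi2}) were designed exactly as the generalization of (\ref{liftupcondition}) that makes this crossing symmetry automatic, with (\ref{pipi2}) guaranteeing the support/locality of each $\hat{E}_x$.

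With the hypothesis verified, Lemma~\ref{lem2} immediately gives $\hat{W}=\mathcal{U}_\eta^{-1}\hat{U}^2\mathcal{U}_\eta$ for the two-step coined walk $\hat{U}=\hat{S}\hat{C}$ on $G$ whose local coins are $\hat{C}_u=\mathcal{U}_\eta\hat{E}_u\mathcal{U}_\eta^{-1}$. To finish, I would identify this coin as the Case~(I) operator. Because $\mathcal{U}_\eta$ carries $\mathcal{C}_u=\spann\{\delta_e\mid V(e)=u\}$ isometrically onto $\spann\{\delta_a\mid t(a)=u\}$, conjugating $\hat{E}_u=2|\tilde{\alpha}_u\rangle\langle\tilde{\alpha}_u|-\bs{1}$ yields $\hat{C}_u=2|\alpha_{u,M}\rangle\langle\alpha_{u,M}|-\bs{1}$ with $\alpha_{u,M}:=\mathcal{U}_\eta\tilde{\alpha}_u$, which is exactly the Case~(I) coin. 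Finally, condition (\ref{pipi3}) states that $\|\tilde{\alpha}_u\|$ equals $1$ for $u\notin M$ and $0$ for $u\in M$, which under $\mathcal{U}_\eta$ is precisely the cut-off (\ref{peropero}) defining $\alpha_{u,M}$ in the coined search; thus the unmarked vertices carry a genuine Grover reflection while each marked vertex contributes $-\bs{1}$, matching the Case~(I) operator $\hat{U}_M=\hat{U}_\emptyset(\Pi_{\mathcal{A}_M^\perp}-\Pi_{\mathcal{A}_M})$. This establishes the claimed unitary equivalence via $\mathcal{U}_\eta$ and completes the argument.
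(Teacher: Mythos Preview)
Your proposal is correct and follows exactly the route the paper intends: the paper states Corollary~\ref{UEQS} as an immediate consequence of Lemma~\ref{lem2} without spelling out any details, and you have correctly supplied those details by checking that condition~(\ref{pipi1}) yields the crossing hypothesis of Lemma~\ref{lem2} and that condition~(\ref{pipi3}) forces the transported coin $\mathcal{U}_\eta\hat{E}_u\mathcal{U}_\eta^{-1}$ to be the Case~(I) operator. There is nothing to add.
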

%

\section{Spectral analysis of coined walks}
As discussed in the above section, the quantum walks analyzed in this work can be interpreted as a two-step coined walk. We put our attention in the class of coined walks in order to analyze it in more detail. The total Hilbert space in this case is $\mathcal{H}=\ell^2(A)$. 
Now we show the spectral map theorem of coined walks with some special coin. 
\subsection{Setting}
For given connected graph G=(V,A), we assign local unitary operators $\hat{C}_u$ for each $u\in V$ under the decomposition 
$\mathcal{H}:=\ell^2(A)=\oplus_{u\in V}\{\psi \;|\; t(a)\neq u \Rightarrow \psi(a)=0\}$ in the coined walk. 
We assume $\sigma(\hat{C}_u) \subseteq \{\pm 1\}$, where $\sigma(\cdot)$ is the spectrum. 
The subspace $\mathcal{C}_u$ are decomposed into
	\[ \mathcal{C}_u=\ker(1-\hat{C}_u) \oplus \ker(1+\hat{C}_u). \]
\begin{remark}
This setting includes all previous examples for quantum searches of $M$, that is, \\

\[ \mathrm{Case\;I:\;\;} \dim \ker(1-\hat{C}_u) = \begin{cases} 1 & \text{if $u\notin M$,}\\ 0 & \text{if $u\in M$.} \end{cases} \]
\[ \mathrm{Case\;II:\;\;} \dim \ker(1-\hat{C}_u) = \begin{cases} 1 & \text{if $u\notin M$,}\\ \mathrm{deg}(u)-1 & \text{if $u\in M$.} \end{cases} \]
\end{remark}
Putting $d_u=\dim\ker(1-\hat{C}_u)$, 
we set 
	\begin{equation}\label{newV} \tilde{V}:=\{(u,\ell) \;|\; \ker(1-\hat{C}_u)\neq \{\bs{0}\},\; \ell=1,\dots,d_u\}. \end{equation}
We define $\mathcal{K}=\ell^2(\tilde{V})$ such that 
	\begin{align*} 
        \mathcal{K} &:= \bigoplus_{u: \ker(1-\hat{C}_u)\neq \bs{0}}\mathbb{C}^{d_u}=\spann\{\;|u;\ell \rangle \;|\; (u,\ell)\in \tilde{V}\}. 
        \end{align*}
Here $|u;\ell\rangle$ denotes the standard basis of $\mathcal{V}$. 
We set $\hat{U}=\hat{S}\hat{C}$ where $S$ is the flip-flop shift operator and $C=\oplus_{u\in V}\hat{C}_u$. 
We will express the spectrum of $U$ on $\ell^2(A)$ whose cardinality is $|A|$ by some self-adjoint operator on $\mathcal{K}$ 
whose cardinality is reduced to $|\tilde{V}| \leq |A|$. 
\subsection{Boundary operator}
Let the complete orthogonal normalized system (CONS) of $\ker(1-\hat{C}_u)\neq \{\bs{0}\}$ be $\{\alpha_u^{(\ell)}\}_{\ell=1}^{d_u}$. 
We define $\partial: \mathcal{H}\to \mathcal{K}$ by 
	\begin{equation}\label{d_1*} 
        (\partial\psi)(u;\ell)=\langle \alpha_u^{(\ell)},\psi \rangle. 
        \end{equation}
It is equivalent to 
	\[ \partial \delta_a 
        	= 
                \sum_{\ell=1}^{d_{t(a)}} \overline{\alpha_{t(a)}^{(\ell)}(a)}\;|t(a);\ell\rangle. 
        \]
The adjoint operator 
$\partial^*: \mathcal{K} \to \mathcal{H}$ is given by 
	\[ (\partial^* f)(a)=\sum_{\ell=1}^{d_{t(a)}} f(t(a);\ell)\alpha_{t(a)}^{(\ell)}(a). \]
We observe that
	\[ (\partial^* f)(a)=\left\langle \overline{f(t(a);\cdot)},\;\alpha_{t(a)}^{(\cdot)}(a) \right\rangle_{\mathbb{C}^{d_{t(a)}}}. \]
It is equivalent to 
	\begin{equation}\label{d_1} 
        \partial^*|u;\ell\rangle = \alpha_u^{(\ell)}. 
        \end{equation}

The following important relations hold: 
	\begin{align}
        \partial \partial^* &= \bs{1}_{\mathcal{K}}; \\
        \partial^* \partial &= \Pi_{\oplus_{u\in V}\ker(1-C_u)},
        \end{align}
where $\Pi_{\mathcal{H}'}$ is the projection onto $\mathcal{H}'\subset \ell^2(A)$. 
Therefore the coin operator $\hat{C}$ is expressed by 
	\begin{equation}
        \hat{C}=2\partial^*\partial-\bs{1}_{\mathcal{H}}.
        \end{equation}
\subsection{Underlying graph and a dynamics on it}
%
\begin{definition}
Let $G=(V,A)$ be the symmetric directed graph which may have multiple arcs 
and $\tilde{V}$ be defined by (\ref{newV}) induced by $V$ and $\{d_u\}_{u\in V}$. 
We set the underlying graph $\tilde{G}$ determined by $(G,\{d_u\}_{u\in V})$ as follows.
The set of vertices of $\tilde{G}$ is $\tilde{V}$. 
The set of symmetric arcs $\tilde{A}$ of $\tilde{G}$ is given by
	\[ \#\{ \tilde{a}\in \tilde{A} \;|\; t(\tilde{a})=(u;\ell),\;o(\tilde{a})=(u',\ell') \}=\#\{ a\in A \;|\; t(a)=u,\;o(a)=u' \} \]
for every $\ell=1,\dots,d_u$ and $\ell'=1,\dots,d_{u'}$. 
\end{definition}
We define a weight $w: \tilde{A}\to \mathbb{C}$ by 
	\[ w(\tilde{a})=\alpha_{t(a)}^{(\ell)}(a) \]
for $\tilde{a}\in \tilde{A}$ such that $o(\tilde{a})=(o(a);\ell)$ and $t(\tilde{a})=(t(a);\ell)$. 
\begin{definition}
The operator $\hat{T}: \mathcal{K} \to \mathcal{K}$ is defined by 
	\[ (\hat{T}f)(\tilde{u})=\sum_{b\in \tilde{A}\;:\;t(b)=\tilde{u}}\overline{w(b)}w\left(\bar{b}\right)f(o(b)) \] 
for every $\tilde{u}\in\tilde{V}$ and $f\in \mathcal{K}$. 
\end{definition}
\begin{lemma}
	\[ \hat{T}=\partial S\partial^*; \]
	\[ \sigma(\hat{T})\subseteq [-1,1]. \]
\end{lemma}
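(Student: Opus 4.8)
The plan is to establish the operator identity $\hat{T}=\partial S\partial^*$ by a direct matrix-element computation, and then to deduce the spectral containment purely from this factorization together with the relations $\partial\partial^*=\bs{1}_{\mathcal{K}}$ and $\partial^*\partial=\Pi_{\oplus_{u}\ker(1-\hat{C}_u)}$ already recorded above.

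For the identity, I would evaluate both sides on an arbitrary $f\in\mathcal{K}$ and compare the value at a vertex $(u;\ell)\in\tilde{V}$. Using $\partial^* f=\sum_{(u',\ell')}f(u';\ell')\,\alpha_{u'}^{(\ell')}$ together with $(S\varphi)(a)=\varphi(\bar a)$, the vector $S\partial^* f$ is supported, at each arc $a$, only on the index $u'=o(a)$, since $\alpha_{u'}^{(\ell')}(\bar a)\neq 0$ forces $t(\bar a)=u'$. Pairing against $\alpha_u^{(\ell)}$, whose support forces $t(a)=u$, yields
\[ (\partial S\partial^* f)(u;\ell)=\sum_{a:\,t(a)=u}\ \overline{\alpha_u^{(\ell)}(a)}\ \sum_{\ell'=1}^{d_{o(a)}}\alpha_{o(a)}^{(\ell')}(\bar a)\,f(o(a);\ell'). \]
On the other hand, enumerating the arcs $b\in\tilde A$ with $t(b)=(u;\ell)$ as the data consisting of an arc $a$ with $t(a)=u$ together with a choice of terminal index $\ell'$ at $o(a)$, and substituting $w(b)=\alpha_u^{(\ell)}(a)$ and $w(\bar b)=\alpha_{o(a)}^{(\ell')}(\bar a)$, turns the definition of $\hat{T}$ into exactly the same double sum. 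This is the step that needs the most care, since it amounts to matching the index bookkeeping of $\tilde A$ and the weight $w$ against the coordinates of $\partial$ and $\partial^*$; everything else in the computation is formal.

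Once the factorization is in hand, the spectral statement is immediate. Since the flip-flop shift satisfies $S^*=S$, we get $\hat{T}^*=(\partial S\partial^*)^*=\partial S\partial^*=\hat{T}$, so $\hat{T}$ is self-adjoint and its spectrum is real. For the norm bound, $\partial\partial^*=\bs{1}_{\mathcal{K}}$ gives $\|\partial\|^2=\|\partial\partial^*\|=1$, and likewise $\|\partial^*\|=\|\partial\|=1$, while $\|S\|=1$ because $S$ is unitary; hence $\|\hat{T}\|\le\|\partial\|\,\|S\|\,\|\partial^*\|=1$. A bounded self-adjoint operator of norm at most $1$ has spectrum contained in $[-1,1]$, which is the assertion $\sigma(\hat{T})\subseteq[-1,1]$. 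I expect no genuine obstacle in this second part; the only real work is the combinatorial verification of the first identity.
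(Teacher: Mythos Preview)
Your proof is correct and follows essentially the same route as the paper: the first identity is obtained by a direct matrix-element computation (the paper simply says ``by a direct computation''; you have spelled out exactly that computation). For the spectral bound the paper instead takes an eigenvector $f$ with $\hat{T}f=\mu f$ and estimates $|\mu|^2\|f\|^2=\|\partial S\partial^* f\|^2=\langle S\partial^* f,\Pi_1 S\partial^* f\rangle\le\|S\partial^* f\|^2=\|f\|^2$, which is the same norm bound you derive via submultiplicativity; your version has the slight advantage of making the self-adjointness of $\hat{T}$ explicit, which is what actually pins the spectrum inside the real interval $[-1,1]$.
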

\begin{proof}
The first part is obtained by a direct computation. 
For the second part of the proof, put $\mu\in \sigma(\hat{T})$ and $f\in \ker (\mu-\hat{T})$. Then 
	\begin{align*} 
        |\mu|^2 ||f||^2 &= \langle \partial \hat{S}\partial^* f, \partial \hat{S}\partial^* f \rangle \\
        		&= \langle \hat{S}f, \Pi_1 \hat{S}\partial f  \rangle 
                        \leq  \langle \partial \hat{S}f, \partial \hat{S}f  \rangle \\
                        &= \langle f, f  \rangle
                        \leq ||f||^2,
        \end{align*}
where $\Pi_1:=\partial^*\partial$. 
\end{proof}
\subsection{Spectrum of $U^2\in \mathcal{C}_2$}
We set $\mathcal{L}:=\partial^* \mathcal{K}+\hat{S}\partial^*\mathcal{K} \subset \mathcal{H}$, which is called the inherited subspace. 
In \cite{MOS}, $\sigma(\hat{C}_u)=\{\pm 1\}$ and $\dim \ker(1-\hat{C}_u)=1$ for any $u\in V$ were assumed, on the other hand, 
we relax this assumption to $\sigma(\hat{C}_u) \subseteq \{\pm 1\}$; 
the eigenvalues and its multiplicities of $C_u$ depend on $u\in V$. 
However a similar argument to \cite{MOS} holds and the proof is essentially same as \cite{MOS}. Thus we skip its proof. 
\begin{theorem}
Let $G=(V,A)$ be a connected multigraph. The unitary operator $\hat{U}$ on $\mathcal{H}$ denotes 
the evolution operator of a coined quantum walk on $G$ with the coin operator $\{\hat{C}_u\}_{u\in V}$, where $\sigma(\hat{C}_u)\subseteq \{\pm 1\}$. 
The evolution operator of the underlying cellular automaton on $\tilde{G}$ is denoted by $T: \mathcal{K}\to \mathcal{K}$.  
Then we have 
\[ \hat{U}=\hat{U}_{\mathcal{L}}\oplus \hat{U}_{\mathcal{L}^\perp} \]
and 
\begin{multline*}
\ker(e^{i\theta}-\hat{U})= \\
\begin{cases}
	\{ \frac{1}{\sqrt{2}|\sin\theta|}\;(1-e^{i\theta} \hat{S})\partial^*f_{\cos\theta} \;|\; f_{\cos \theta} \in \ker(\cos\theta-\hat{T}) \} & \text{: $e^{i\theta}\in \sigma(\hat{U})\setminus\{\pm 1\}$,} \\
        \{ \partial^* f_{\cos \theta} \;|\; f_{\cos \theta} \in \ker(\cos\theta-\hat{T}) \} & \text{: $e^{i\theta}\in \sigma(\hat{U}|_{\mathcal{L}})\cap\{\pm 1\}$,} \\
	\ker(\partial) \cap \ker(1\pm \hat{S}) & \text{: $e^{i\theta}\in \sigma(\hat{U}|_{\mathcal{L}^\perp})\cap\{\pm 1\}$.}
\end{cases}
\end{multline*}
\end{theorem}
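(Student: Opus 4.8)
The plan is to follow the abstract spectral-mapping strategy of \cite{MOS}, exploiting that $\hat{U}=\hat{S}\hat{C}$ is a product of two involutions: the flip-flop shift, with $\hat{S}=\hat{S}^*=\hat{S}^{-1}$, and the coin $\hat{C}=2\partial^*\partial-\bs{1}_{\mathcal{H}}$, which squares to $\bs{1}_{\mathcal{H}}$ precisely because $\sigma(\hat{C}_u)\subseteq\{\pm 1\}$. The self-adjoint operator $\hat{T}=\partial\hat{S}\partial^*$ on $\mathcal{K}$, together with the identities $\partial\partial^*=\bs{1}_{\mathcal{K}}$ and $\partial^*\partial=\Pi_{\oplus_u\ker(1-\hat{C}_u)}$, is the bridge between the two spectra: I will show that eigenvalues $e^{i\theta}$ of $\hat{U}$ on the inherited subspace $\mathcal{L}$ correspond bijectively to eigenvalues $\cos\theta$ of $\hat{T}$, and I will write the eigenvectors explicitly. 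The first step is to verify that $\mathcal{L}$ and $\mathcal{L}^\perp$ are both $\hat{U}$-invariant, which yields $\hat{U}=\hat{U}_{\mathcal{L}}\oplus\hat{U}_{\mathcal{L}^\perp}$. A direct computation using $\partial\partial^*=\bs{1}_{\mathcal{K}}$ gives $\hat{U}\partial^*f=\hat{S}\partial^*f$ and $\hat{U}\hat{S}\partial^*f=2\hat{S}\partial^*\hat{T}f-\partial^*f$, so $\hat{U}$ preserves the generators of $\mathcal{L}=\partial^*\mathcal{K}+\hat{S}\partial^*\mathcal{K}$; the analogous identities for $\hat{U}^*=\hat{C}\hat{S}$ show $\mathcal{L}$ is $\hat{U}^*$-invariant as well, whence $\mathcal{L}^\perp$ is $\hat{U}$-invariant.

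On $\mathcal{L}^\perp$ the analysis is immediate. Since $\psi\perp\partial^*\mathcal{K}$ is equivalent to $\partial\psi=\bs{0}$ and $\psi\perp\hat{S}\partial^*\mathcal{K}$ to $\partial\hat{S}\psi=\bs{0}$, we have $\mathcal{L}^\perp=\ker\partial\cap\hat{S}\ker\partial$. On $\ker\partial$ the coin reduces to $\hat{C}=-\bs{1}_{\mathcal{H}}$, hence $\hat{U}|_{\mathcal{L}^\perp}=-\hat{S}$, whose eigenvalue-$(\pm1)$ eigenspaces are exactly $\ker(\partial)\cap\ker(1\mp\hat{S})$. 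This yields the third case of the statement.

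The heart of the proof is the inherited subspace $\mathcal{L}$. For $e^{i\theta}\neq\pm1$ I would take the ansatz $\psi=(1-e^{i\theta}\hat{S})\partial^*f$ with $f\in\ker(\cos\theta-\hat{T})$ and, using $\hat{S}^2=\bs{1}_{\mathcal{H}}$ and $\partial^*\partial\hat{S}\partial^*f=\partial^*\hat{T}f$, compute $\hat{U}\psi=e^{i\theta}\partial^*f+(1-2e^{i\theta}\cos\theta)\hat{S}\partial^*f$; this equals $e^{i\theta}\psi$ iff $e^{2i\theta}-2\cos\theta\,e^{i\theta}+1=0$, which holds automatically once $\hat{T}f=\cos\theta\,f$. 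A short inner-product calculation using $\langle\partial^*f,\hat{S}\partial^*f\rangle=\langle f,\hat{T}f\rangle=\cos\theta$ gives $\|\psi\|^2=2-2\cos^2\theta=2\sin^2\theta$, producing the normalization $1/(\sqrt{2}\,|\sin\theta|)$. At the boundary $\cos\theta=\pm1$ this ansatz degenerates; there $\langle\hat{S}\partial^*f,\partial^*f\rangle=\langle f,\hat{T}f\rangle=\pm\|f\|^2$ together with $\|\hat{S}\partial^*f\|=\|\partial^*f\|=\|f\|$ forces equality in Cauchy--Schwarz, so $\hat{S}\partial^*f=\pm\partial^*f$ and the eigenvector collapses to $\partial^*f$ with $\hat{U}\partial^*f=\hat{S}\partial^*f=\pm\partial^*f$; this is the second case.

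The step I expect to be the main obstacle is completeness: confirming that these vectors exhaust every eigenspace and that no part of the spectrum is missed. For this I would invoke the spectral theorem for the self-adjoint $\hat{T}$ on $\mathcal{K}$ and lift each spectral subspace $\ker(\lambda-\hat{T})$ into $\mathcal{L}$---two-dimensionally for $\lambda\in(-1,1)$, carrying the pair $e^{\pm i\theta}$, and one-dimensionally for $\lambda=\pm1$---then check by a dimension count that $\dim\mathcal{L}=2\,\#\{\lambda\in(-1,1)\}+\#\{\lambda=\pm1\}$ agrees with $\dim(\partial^*\mathcal{K}+\hat{S}\partial^*\mathcal{K})$; in the infinite-dimensional case the same bookkeeping is carried out with the spectral measure of $\hat{T}$. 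The remaining delicate point is the disambiguation at $\theta\in\{0,\pi\}$: the $\pm1$ eigenvectors of the form $\partial^*f$ live in $\mathcal{L}$ (second case), whereas those lying in $\ker(\partial)\cap\ker(1\mp\hat{S})$ live in $\mathcal{L}^\perp$ (third case), which is exactly the distinction $\sigma(\hat{U}|_{\mathcal{L}})\cap\{\pm1\}$ versus $\sigma(\hat{U}|_{\mathcal{L}^\perp})\cap\{\pm1\}$ appearing in the statement.
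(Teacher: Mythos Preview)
Your proposal is correct and follows precisely the spectral-mapping argument of \cite{MOS} that the paper itself invokes in lieu of a proof (``a similar argument to \cite{MOS} holds and the proof is essentially same as \cite{MOS}. Thus we skip its proof''). One tiny slip: on $\mathcal{L}^\perp$ you have $\hat{U}=-\hat{S}$, so the eigenvalue $+1$ corresponds to $\ker(\partial)\cap\ker(1+\hat{S})$ and $-1$ to $\ker(\partial)\cap\ker(1-\hat{S})$, i.e.\ the signs in your $\ker(1\mp\hat{S})$ should be swapped to match the statement's $\ker(1\pm\hat{S})$.
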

The above theorem immediately leads to the following corollary: 
\begin{corollary}
The setting and notations are same as begore. Then, 
\begin{multline*}
\ker(e^{i\theta}-\hat{U}^2)= \\
\begin{cases}
	\{ \frac{1}{\sqrt{2}|\sin(\theta/2)|}\;(1-e^{i\theta/2} \hat{S})\partial^*f_{\cos(\theta/2)} \;|\; f_{\cos (\theta/2)} \in \ker[\cos(\theta/2)-\hat{T}] \} & \text{: $e^{i\theta}\in \sigma(\hat{U}^2)\setminus\{1\}$,} \\
        \{ \partial^* f_{\cos (\theta/2)} \;|\; f_{\cos (\theta/2)} \in \ker[\cos(\theta/2)-\hat{T}] \} & \text{: $e^{i\theta}\in \sigma(\hat{U}^2|_{\mathcal{L}})\cap\{1\}$,} \\
	\ker(\partial) \cap \ker(\partial \hat{S}) & \text{: $e^{i\theta}\in \sigma(\hat{U}^2|_{\mathcal{L}^\perp})\cap\{1\}$.}
\end{cases}
\end{multline*}
\end{corollary}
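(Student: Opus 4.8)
The plan is to reduce the statement to the preceding spectral theorem for $\hat U$ by exploiting the algebraic relation between $\hat U$ and $\hat U^2$. Since $\hat U$ is unitary, hence normal, and the polynomial $z^2-e^{i\theta}$ factors as $(z-e^{i\theta/2})(z+e^{i\theta/2})$ with two \emph{distinct} roots, the spectral theorem for normal operators gives the orthogonal splitting
\[ \ker(e^{i\theta}-\hat U^2)=\ker(e^{i\theta/2}-\hat U)\oplus\ker(-e^{i\theta/2}-\hat U). \]
Writing $-e^{i\theta/2}=e^{i(\theta/2+\pi)}$, each summand is an eigenspace of $\hat U$ to which the theorem above applies directly, with its angle replaced by $\theta/2$ and by $\theta/2+\pi$, respectively. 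Thus the whole argument is a matter of substituting these two angles and collecting terms, using the decomposition $\hat U=\hat U_{\mathcal L}\oplus\hat U_{\mathcal L^\perp}$ to keep track of where each piece lives.

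First I would treat the generic case $e^{i\theta}\in\sigma(\hat U^2)\setminus\{1\}$, which lives entirely in $\mathcal L$ since $\hat U^2|_{\mathcal L^\perp}$ has spectrum in $\{1\}$. Applying the first branch of the theorem with angle $\theta/2$ yields exactly the displayed vectors $\tfrac{1}{\sqrt2\,|\sin(\theta/2)|}(1-e^{i\theta/2}\hat S)\partial^* f$ with $f\in\ker(\cos(\theta/2)-\hat T)$; the inclusion $\supseteq$ is immediate, as $\hat U\psi=e^{i\theta/2}\psi$ forces $\hat U^2\psi=e^{i\theta}\psi$. The complementary root $-e^{i\theta/2}$, obtained from angle $\theta/2+\pi$, contributes the companion family $\tfrac{1}{\sqrt2\,|\sin(\theta/2)|}(1+e^{i\theta/2}\hat S)\partial^* g$ with $g\in\ker(-\cos(\theta/2)-\hat T)$, because $\cos(\theta/2+\pi)=-\cos(\theta/2)$, $|\sin(\theta/2+\pi)|=|\sin(\theta/2)|$, and $e^{i(\theta/2+\pi)}=-e^{i\theta/2}$. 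Together these two families exhaust $\ker(e^{i\theta}-\hat U^2)$.

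Next I would isolate $e^{i\theta}=1$, splitting along $\mathcal H=\mathcal L\oplus\mathcal L^\perp$. On $\mathcal L$ the two roots are $\pm1$, i.e.\ angles $0$ and $\pi$; the second branch of the theorem then gives $\partial^* f$ with $f\in\ker(1-\hat T)$ from angle $0$, together with $\partial^* f$ with $f\in\ker(-1-\hat T)$ from angle $\pi$, which assemble into the displayed second line. On $\mathcal L^\perp$ the theorem yields $\sigma(\hat U|_{\mathcal L^\perp})\subseteq\{\pm1\}$, so $\hat U^2|_{\mathcal L^\perp}$ is unitary with spectrum in $\{1\}$ and hence equals $\bs{1}_{\mathcal L^\perp}$; therefore $\ker(1-\hat U^2)\cap\mathcal L^\perp=\mathcal L^\perp$. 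Finally I would identify this subspace by unwinding $\mathcal L=\partial^*\mathcal K+\hat S\partial^*\mathcal K$: using self-adjointness of $\hat S$, the condition $\psi\perp\mathcal L$ is equivalent to $\partial\psi=0$ and $\partial\hat S\psi=0$, so $\mathcal L^\perp=\ker(\partial)\cap\ker(\partial\hat S)$, which is the displayed third line.

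The routine parts are the two trigonometric substitutions and the check that the normalization $\tfrac{1}{\sqrt2\,|\sin(\theta/2)|}$ transports correctly. The one genuinely delicate point is the bookkeeping of the two square roots $\pm e^{i\theta/2}$: one must keep the $\cos(\theta/2)$ and $-\cos(\theta/2)$ eigenspaces of $\hat T$ separate and route each into the correct branch of the formula, and likewise track how the two degenerate values $\pm1$ of $\hat U$ collapse onto the single value $1$ of $\hat U^2$ while splitting across $\mathcal L$ and $\mathcal L^\perp$. Once this root-to-eigenspace correspondence is pinned down, each case is a direct transcription of the theorem above.
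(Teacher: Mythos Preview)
The paper gives no proof beyond the sentence ``the above theorem immediately leads to the following corollary,'' and your derivation---factoring $e^{i\theta}-\hat U^2$ via its two square roots and applying the theorem at angles $\theta/2$ and $\theta/2+\pi$---is exactly the intended argument. Your careful tracking of both roots in fact goes slightly beyond the paper: as written, the first two lines of the corollary display only the branch coming from $e^{i\theta/2}$, and the companion families you identify (those arising from $\ker(-\cos(\theta/2)-\hat T)$ with the sign of $e^{i\theta/2}\hat S$ flipped) are needed to complete the description of $\ker(e^{i\theta}-\hat U^2)$; your identification of $\mathcal L^\perp$ with $\ker(\partial)\cap\ker(\partial\hat S)$ is likewise correct and is the one step that is more than pure substitution.
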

Let $G=(X\sqcup Y,A)$ be a bipartite graph. We set 
\[ \mathcal{A}_X:=\{ \psi\in \ell^2(A) \;|\; t(a)\notin X \Rightarrow \psi(a)=0 \}, \] 
and 
\[ \mathcal{A}_Y:=\{ \psi\in \ell^2(A) \;|\; t(a)\notin Y \Rightarrow \psi(a)=0 \}. \]
We put $\hat{U}_{YX}:=\hat{U}\Pi_{\mathcal{A}_X}$, $\hat{U}_{XY}=\hat{U}\Pi_{\mathcal{A}_Y}$. Since $G$ is a bipartite graph, we have 
\[ (\hat{U}\Pi_{\mathcal{A}_X})^2 = \hat{U}_{XY}\hat{U}_{YX}. \]

If we are given a two-partition walk $\hat{U}'\in \mathcal{P}$, then by Theorem~\ref{unitaryeq}, we can convert this walk on $(\Omega;\pi_1,\pi_2)$ 
to some two-step coined walk $\hat{U}^2\in \mathcal{C}_2$. 
This walk is a coined walk on some bipartite graph $G=(X\sqcup Y,A)$ which is an intersection graph 
and $\hat{U}'$ is unitary equivalent to $\mathcal{}U_{XY}U_{YX}$ with a unitary map $\mathcal{W}:\ell^2(\Omega)\to \ell^2(A)$. 
By using the commutative diagram in Fig.~\ref{commutativemap}, the unitary map $\mathcal{W}$ is expressed as follows: 
	\[ (\mathcal{W}\psi)(a)=\psi(\gamma_E^{-1}\xi_X^{-1}(a)). \]
Then, the spectral analysis of $\hat{U}'$ is essentially obtained by the following corollary: 
\begin{corollary}
Let $G=(X\sqcup Y,A)$ be the intersection multigraph induced by $(\Omega;\pi_1,\pi_2)$, 
and $\hat{U}'\in \mathcal{P}$ be an evolution operator on $(\Omega;\pi_1,\pi_2)$. 
Moreover, let 
\[ \hat{U} := \begin{bmatrix} 0 & \hat{U}_{YX} \\ \hat{U}_{XY} & 0 \end{bmatrix}\in \mathcal{C}\] be the unitary operator equivalent 
to $\hat{U}'$ on $(X\sqcup Y,A)$. 
Then, we have 
\begin{align*}
\ker(e^{i\theta}-\hat{U}')=
\mathcal{W}^{-1}\ker(e^{i\theta}-\hat{U}_{XY}\hat{U}_{YX})\mathcal{W}= \mathcal{W}^{-1}\Pi_{\mathcal{A}_X}\ker(e^{i\theta}-\hat{U}^2)\mathcal{W}.
\end{align*}
\end{corollary}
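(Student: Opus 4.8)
The plan is to reduce everything to the operator-level unitary equivalence already secured by Theorem~\ref{unitaryeq} together with the bipartite block structure of the coined evolution $\hat{U}$. The only conceptually loaded ingredient is the operator identity
\[ \hat{U}' = \mathcal{W}^{-1}\,(\hat{U}_{XY}\hat{U}_{YX})\,\mathcal{W} \]
on $\ell^2(\Omega)$, where $\mathcal{W}=\mathcal{U}_{\xi_X}\mathcal{U}_{\gamma_E}$ is exactly the composite realizing the chain $\mathcal{P}\cong\mathcal{B}\cong\mathcal{C}_2$. First I would assemble this from the two propositions of Secs.~4.2 and~4.3.1: the former gives $\hat{U}'=\mathcal{U}_{\gamma_E}^{-1}\hat{W}\mathcal{U}_{\gamma_E}$ for the bipartite walk $\hat{W}$ on the intersection graph, and the latter gives $\hat{W}=\mathcal{U}_{\xi_X}^{-1}\hat{U}^2\mathcal{U}_{\xi_X}$ with $\mathcal{U}_{\xi_X}:\ell^2(E)\to\mathcal{A}_X$. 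Composing and using $(\mathcal{W}\psi)(a)=\psi(\gamma_E^{-1}\xi_X^{-1}(a))$ yields the displayed identity, once one records that $\hat{U}_{XY}\hat{U}_{YX}=\hat{U}^2\Pi_{\mathcal{A}_X}=\hat{U}^2|_{\mathcal{A}_X}$: since $\hat{U}\Pi_{\mathcal{A}_X}$ already lands in $\mathcal{A}_Y$, the intermediate factor $\Pi_{\mathcal{A}_Y}$ in $\hat{U}_{XY}\hat{U}_{YX}=\hat{U}\Pi_{\mathcal{A}_Y}\hat{U}\Pi_{\mathcal{A}_X}$ is redundant.

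Granting this identity, the first equality is immediate: conjugation by the unitary $\mathcal{W}$ carries eigenvectors to eigenvectors, so $\psi\in\ker(e^{i\theta}-\hat{U}')$ holds if and only if $\mathcal{W}\psi\in\ker(e^{i\theta}-\hat{U}_{XY}\hat{U}_{YX})$, which is precisely the claimed $\ker(e^{i\theta}-\hat{U}')=\mathcal{W}^{-1}\ker(e^{i\theta}-\hat{U}_{XY}\hat{U}_{YX})$, the trailing $\mathcal{W}$ being read as bookkeeping for this conjugacy and $\mathcal{W}^{-1}(\cdot)$ denoting the preimage of the subspace.

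For the second equality it suffices, since $\mathcal{W}$ is a bijection, to prove
\[ \Pi_{\mathcal{A}_X}\,\ker(e^{i\theta}-\hat{U}^2)=\ker(e^{i\theta}-\hat{U}_{XY}\hat{U}_{YX}) \]
inside $\mathcal{A}_X$. Here I would invoke the bipartite structure of $G=(X\sqcup Y,A)$: the coin $\hat{C}$ preserves each of $\mathcal{A}_X,\mathcal{A}_Y$ while the flip-flop shift $\hat{S}$ interchanges them, so $\hat{U}=\hat{S}\hat{C}$ is block off-diagonal with respect to $\ell^2(A)=\mathcal{A}_X\oplus\mathcal{A}_Y$, and therefore $\hat{U}^2$ is block diagonal with diagonal blocks $\hat{U}_{XY}\hat{U}_{YX}$ on $\mathcal{A}_X$ and $\hat{U}_{YX}\hat{U}_{XY}$ on $\mathcal{A}_Y$. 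Consequently $\ker(e^{i\theta}-\hat{U}^2)$ splits as the orthogonal direct sum of its $\mathcal{A}_X$- and $\mathcal{A}_Y$-parts; applying $\Pi_{\mathcal{A}_X}$ leaves the first summand intact and annihilates the second, yielding exactly $\ker(e^{i\theta}-\hat{U}_{XY}\hat{U}_{YX})$.

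I do not expect a genuine obstacle: the substantive content—the unitary equivalence $\hat{U}'\cong\hat{U}^2|_{\mathcal{A}_X}$—is already delivered by Theorem~\ref{unitaryeq}, and what remains is the standard observation that conjugation preserves eigenspaces together with the elementary block-diagonal splitting. The one point demanding care is notational: matching the subscript convention for $\hat{U}_{YX},\hat{U}_{XY}$ against the block positions so that $\hat{U}^2|_{\mathcal{A}_X}$ is indeed $\hat{U}_{XY}\hat{U}_{YX}$ and not $\hat{U}_{YX}\hat{U}_{XY}$, and interpreting the subspace-conjugation notation $\mathcal{W}^{-1}(\cdots)\mathcal{W}$ consistently as the $\mathcal{W}$-preimage throughout.
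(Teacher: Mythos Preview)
Your proposal is correct and follows essentially the same route as the paper: the paper states this corollary without a separate proof, relying on the paragraph immediately preceding it, which invokes Theorem~\ref{unitaryeq} and the commutative diagram to obtain $\hat{U}'\cong \hat{U}_{XY}\hat{U}_{YX}$ via $\mathcal{W}=\mathcal{U}_{\xi_X}\mathcal{U}_{\gamma_E}$, together with the bipartite identity $(\hat{U}\Pi_{\mathcal{A}_X})^2=\hat{U}_{XY}\hat{U}_{YX}$. Your argument simply makes explicit the block-diagonal splitting of $\hat{U}^2$ that the paper takes for granted; there is no substantive difference in approach.
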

%
%

\section{Conclusions}

\rp{
From the pure-mathematics viewpoint, the quantum walk is a strikingly interesting area due to the richness of the models. The main result of this work is Theorem~\ref{unitaryeq}, which shows that there are four families of quantum walk models unitarily equivalent, namely, (1)~the two-step coined model, (2)~the extension of Szegedy's model for multigraphs, (3)~the two-tessellable staggered model, and (4)~the two-partition model. The details on the equivalence between families~(1) and~(2) was addressed in Lemma~\ref{lem}. The family of coined quantum walks (one-step coined model) is strictly included in all those models listed above, that is, none of the above families is included in the coined model. Notice that the only demand in the coin choice in the coined model is unitarity, that is, no explicit formula for the coin operator is imposed. The locality is fulfilled because the coin space in an internal space. The same kind of unitary freedom must be allowed to the Szegedy model on multigraphs and to the two-tessellable staggered model provided the locality is fulfilled.

As a future work, it is interesting to analyze $k$-tessellable staggered models with $k>2$.
}

\par
\
\par\noindent
\noindent
{\bf Acknowledgments.}
\par
NK is partially supported by the Grant-in-Aid for Scientific Research (Chal-lenging Exploratory Research) of Japan Society for the Promotion of Science (Grant No. 15K13443). 
RP acknowledges financial support from Faperj (Grant No. E-26/102.350/2013) and CNPq (Grant No. 303406/2015-1).
IS is partially supported by the Grant-in-Aid for Scientific Research (C) of Japan Society for the Promotion of Science (Grant No.~15K04985). 
ES acknowledges financial support from 
the Grant-in-Aid for Young Scientists (B) and of Scientific Research (B) Japan Society for the Promotion of Science (Grant No.~16K17637, No.~16K03939).
\par

\begin{small}

\end{small}

\end{document}